\newcommand{\subparagraph}{}
\titlespacing*{\section}{15pt}{1.2\baselineskip}{0.9\baselineskip}
\newcommand{\myhash}{%
  {\settoheight{\dimen0}{C}\kern-.05em\, \resizebox{!}{\dimen0}{\raisebox{\depth}{\#}}}}
\def\betam{{\boldsymbol{\beta}}}
\def\gammam{{\boldsymbol{\gamma}}}
\def\probd{\bbp_{\sfD}}
\def\probf{\bbp_{\sfF\sfA}}
\newcommand{\vc}[1]{\mathbf{#1}}
\newcommand{\mt}[1]{\mathbf{#1}}
\newcommand{\fro}{\text{F}}
\newcommand{\SigmaEmp}{\widehat{\Sigmam}_\bfy} 
\newcommand{\sigmaEmp}{\widehat{\sigmam}_\bfy} 
\newcommand{\Amatrix}{\bA}  
\newcommand{\Aop}{\mathcal{A}}  
\newcommand{\Acol}{\mt{A}} 
\newcommand{\avec}{\bfa} 
\newcommand{\bxtrue}{\gammam^\circ} 
\newcommand{\xtrue}{\gamma^\circ} 
\newcommand{\xsome}{\gammam} 
\newcommand{\bxnnls}{\gammam^*} 
\newcommand{\ID}{\bI} 
\newcommand{\dimPilots}{D_c} 
\newcommand{\dimParam}{{K_c}} 
\newcommand{\ellp}[1]{{#1}} 
\def\mindex#1{\index{#1}}
\def\sq{\hbox{\rlap{$\sqcap$}$\sqcup$}}
\def\qed{\ifmmode\sq\else{\unskip\nobreak\hfil
\penalty50\hskip1em\null\nobreak\hfil\sq
\parfillskip=0pt\finalhyphendemerits=0\endgraf}\fi\medskip}
\long\def\defbox#1{\framebox[.9\hsize][c]{\parbox{.85\hsize}{%
\parindent=0pt
\baselineskip=12pt plus .1pt      
\parskip=6pt plus 1.5pt minus 1pt 
 #1}}}
\long\def\beginbox#1\endbox{\subsection*{}%
\hbox{\hspace{.05\hsize}\defbox{\medskip#1\bigskip}}%
\subsection*{}}
\def\endbox{}
\def\diag{{\text{diag}}}
\def\tr{\mathsf{tr}}
\newsavebox{\junk}
\savebox{\junk}[1.6mm]{\hbox{$|\!|\!|$}}
\def\argmin{\mathop{\rm arg\, min}}
\def\Re{\field{R}}
\def\bA{{\mathbb A}}
\def\bC{{\mathbb C}}
\def\bD{{\mathbb D}}
\def\bE{{\mathbb E}}
\def\bI{{\mathbb I}}
\def\bR{{\mathbb R}}
\def\bbp{{\mathbb p}}
\def\bfA{{\bf A}}
\def\bfC{{\bf C}}
\def\bfF{{\bf F}}
\def\bfH{{\bf H}}
\def\bfI{{\bf I}}
\def\bfN{{\bf N}}
\def\bfP{{\bf P}}
\def\bfQ{{\bf Q}}
\def\bfR{{\bf R}}
\def\bfU{{\bf U}}
\def\bfV{{\bf V}}
\def\bfX{{\bf X}}
\def\bfY{{\bf Y}}
\def\bfZ{{\bf Z}}
\def\bfa{{\bf a}}
\def\bfd{{\bf d}}
\def\bfe{{\bf e}}
\def\bfg{{\bf g}}
\def\bfh{{\bf h}}
\def\bfu{{\bf u}}
\def\bfv{{\bf v}}
\def\bfw{{\bf w}}
\def\bfx{{\bf x}}
\def\bfy{{\bf y}}
\def\bfz{{\bf z}}
\def\ttc{{\mathtt c}}
\def\tte{{\mathtt e}}
\def\ttn{{\mathtt n}}
\def\tto{{\mathtt o}}
\def\sfA{{\sf A}}
\def\sfD{{\sf D}}
\def\sfF{{\sf F}}
\def\sfH{{\sf H}}
\def\bfmath#1{{\mathchoice{\mbox{\boldmath$#1$}}%
{\mbox{\boldmath$#1$}}%
{\mbox{\boldmath$\scriptstyle#1$}}%
{\mbox{\boldmath$\scriptscriptstyle#1$}}}}
\def\bfmY{\bfmath{Y}}
\def\bfmhhaY{\bfmath{\hhaY}} 
\def\bfmhhaY{\hbox to 0pt{$\widehat{\bfmY}$\hss}\widehat{\phantom{\raise 1.25pt\hbox{$\bfmY$}}}}
\def\til={{\widetilde =}}
\def\clA{{\cal A}}
\def\clC{{\cal C}}
\def\clK{{\cal K}}
\def\clN{{\cal N}}
\def\clP{{\cal P}}
\def\clQ{{\cal Q}}
\def\clS{{\cal S}}
\def\clV{{\cal V}}
 \def\FRAC#1#2#3{\genfrac{}{}{}{#1}{#2}{#3}}
\def\ddtp{{\mathchoice{\FRAC{1}{d^{\hbox to 2pt{\rm\tiny +\hss}}}{dt}}%
{\FRAC{1}{d^{\hbox to 2pt{\rm\tiny +\hss}}}{dt}}%
{\FRAC{3}{d^{\hbox to 2pt{\rm\tiny +\hss}}}{dt}}%
{\FRAC{3}{d^{\hbox to 2pt{\rm\tiny +\hss}}}{dt}}}}
\def\average#1,#2,{{1\over #2} \sum_{#1}^{#2}}
\def\eye(#1){{\bf(#1)}\quad}
\newtheorem{theorem}{{\bf Theorem}}
\newtheorem{remark}{{\bf Remark}}
\def\eq#1/{(\ref{e:#1})}
\newcommand{\beqn}[1]{\notes{#1}%
\begin{eqnarray} \elabel{#1}}
\newcommand{\eeqn}{\end{eqnarray} }
\newcommand{\beq}[1]{\notes{#1}%
\begin{equation}\elabel{#1}}
\newcommand{\eeq}{\end{equation}}
\def\bdes{\begin{description}}
\def\edes{\end{description}}
\newcounter{rmnum}
\newcounter{anum}
\def\ass(#1:#2){(#1\ref{#1:#2})}
\def\ritem#1{
\item[{\sf \ass(\current_model:#1)}]
}
\newenvironment{recall-ass}[1]{%
\begin{description}
\def\current_model{#1}}{
\end{description}
}
\pgfplotsset{compat=newest}
\def\herm{{\sfH}}
\def\snr{{\mathsf{snr}}}
\def\sigmam{{\boldsymbol{\sigma}}}
\def\cg{{\clC\clN}} 
\newcommand{\normd}[1]{{\left\vert\kern-0.25ex\left\vert\kern-0.25ex\left\vert #1 
    \right\vert\kern-0.25ex\right\vert\kern-0.25ex\right\vert}}
\def \cone {{\ttc\tto\ttn\tte}}
\long\def\comment#1{}
\newcommand{\xv}{{\bf x}}
\newcommand{\Xm}{{\bf X}}
\newcommand{\Xc}{{\cal X}}
\newcommand{\Gammam}{\boldsymbol{\Gamma}}
\newcommand{\Lambdam}{\boldsymbol{\Lambda}}
\newcommand{\Deltam}{\boldsymbol{\Delta}}
\newcommand{\Sigmam}{\boldsymbol{\Sigma}}
\newcommand{\trace}{{\hbox{tr}}}
\renewcommand{\Re}{{\rm Re}}
\renewcommand{\Im}{{\rm Im}}
\newcommand{\transp}{{\sf T}}
\renewcommand{\vec}{{\rm vec}}
\title{A New Scaling Law for Activity Detection in Massive MIMO
  Systems }
\author{Saeid Haghighatshoar,  Peter Jung, Giuseppe Caire
\thanks{The authors are with the Communications and Information Theory Group, Technische Universit\"{a}t Berlin (\{saeid.haghighatshoar, peter.jung, caire\}@tu-berlin.de).}
\thanks{A short version of this paper was accepted for presentation in the 2018 IEEE International Symposium on Information Theory (ISIT) in Vail, Colorado, USA \cite{AD:isit2018}.}
}
\begin{document}

\maketitle

\begin{abstract}
	
	  In this paper, we study the problem of \textit{activity
		detection} (AD) in a massive MIMO setup, where the Base Station
	(BS) has $M \gg 1$ antennas. We consider a block fading channel model where
	the $M$-dim channel vector of each user remains almost constant over
	a \textit{coherence block} (CB) containing $D_c$ signal dimensions. We
	study a setting in which the  number of potential users $K_c$ assigned to a specific CB is much larger than the dimension of the CB  $D_c$ ($K_c \gg D_c$)
	but at each time slot only $A_c \ll K_c$ of them are active.  Most
	of the previous results, based on compressed sensing, require that 
	$A_c\le D_c$,  which is a bottleneck in massive deployment
	scenarios such as Internet-of-Things (IoT) and Device-to-Device (D2D)  communication.  
	In this paper, we show that one can overcome  this fundamental
	limitation when the number of BS antennas $M$ is sufficiently large.
	More specifically, we derive a
	\textit{scaling law} on the parameters $(M, D_c, K_c, A_c)$ and also
	\textit{Signal-to-Noise Ratio} (SNR) under which our proposed AD scheme succeeds.
	Our analysis indicates that with a CB of dimension $D_c$, and a sufficient number of BS antennas $M$ with $A_c/M=o(1)$,  one
	can identify the activity of 
	$A_c=O(D_c^2/\log^2(\frac{K_c}{A_c}))$ active users, which is much larger than  the previous bound $A_c=O(D_c)$
	obtained via traditional compressed sensing techniques. In particular, in our proposed scheme
	one needs
	to pay only a poly-logarithmic penalty
	$O(\log^2(\frac{K_c}{A_c}))$ for increasing the number of potential
	users $K_c$, which makes it ideally suited for AD in IoT setups.  
	We propose  low-complexity algorithms for AD and provide numerical simulations to
	illustrate our results. We also compare the performance of our proposed AD algorithms with that of other competitive algorithms in the literature.

\end{abstract}

\begin{keywords}
Activity detection, Internet of Things (IoT), Device-to-Device Communications, Massive MIMO.
\end{keywords}

\section{Introduction}
%
Massive connectivity is predicted to play a crucial role in  future generation of wireless cellular networks that support Internet-of-Things (IoT) and Device-to-Device (D2D) communication. In such scenarios, a Base Station (BS) should be able to connect to a large number of devices and the underlying shared communication resources (time, bandwidth, etc.) are dramatically overloaded. However, a key feature of wireless traffic in those systems (especially,  in IoT) is that the device activity patterns are typically sporadic such that over any  communication resource only a small fraction of all potential devices are active. A feasible communication in those scenarios typically consists of two phases: {\bf i})\,identifying the set of active users by spending a fraction of communication resources, {\bf ii})\,serving the set of active users via a suitable scheduling over the remaining communication resources. 

In this paper, we mainly investigate the first phase, namely, identifying the activity pattern of users, known as \textit{activity detection} (AD).  
We consider a generic block fading wireless communication channel between devices and the  BS \cite{tse2005fundamentals}. We assume that the channel can be decomposed into a set of \textit{coherence blocks} (CBs), where each CB consists of $D_c$ signal dimensions over which the channel fading coefficients of each user  remain almost constant, whereas the channel might vary independently across different CBs \cite{tse2005fundamentals}. 
AD is a fundamental challenge in  massive
deployments and random access scenarios to be expected for IoT and D2D (see, e.g., 
\cite{Bockelmann:ETT2013,boljanovic2017user,kueng2016robust} for some recent works). A fundamental 
 limitation when considering solely a single-antenna setting is that the required signal dimension $D_c$ to identify reliably a subset of $A_c$ active users among a set consisting of $K_c$ \textit{potentially active}  users scales as $D_c=O(A_c \log(\frac{K_c}{A_c}))$, thus, almost linearly with $A_c$. To keep
up with the scaling requirements in IoT and D2D setup where $A_c$ and $K_c$ are dramatically large,  it is crucial to
overcome this limitation in an efficient way that does not require devoting too many CBs to AD. 
One of the
recent works along this line is \cite{liu2017massive, chen2018sparse}, where the
authors proposed using multiple antennas at the BS to overcome this fundamental problem. More specifically, they showed that by assigning random Gaussian pilot sequences to the users, one can identify any subset of active users with a vanishing error probability in a massive MIMO setup (when the number of BS antennas $M \to \infty$) \cite{Marzetta-TWC10}  provided that $D_c$ is large. 
%
In contrast, in \cite{docomo} the authors studied another variant of AD and showed that over a CB of dimension $D_c$, one is able to identify up to
$O(D_c^2)$ active users provided that the \textit{Signal-to-Noise Ratio} (SNR) of  the active users is sufficiently large and the number of antennas $M \to \infty$. However, \cite{liu2017massive, chen2018sparse, docomo} do not specify the
finite-length scaling law on the number of antennas, pilot dimension, the number of users, and the number of active users $(M, D_c, K_c, A_c)$ required for a reliable AD.

In this paper, we bridge the gap by deriving
a finite-length scaling law on $(M, D_c, K_c, A_c)$ and SNR for the scheme proposed in \cite{docomo}. In particular, we show that:
{\bf i})\,with a sufficient number of BS antennas, with $A_c/M=o(1)$, 
one can identify reliably the activity of 
$A_c = O(D_c^2/\log^2 (\frac{K_c}{A_c}))$ active users among a set of $K_c$ users, which is orders of magnitude better the bound $A_c=O(D_c/\log (\frac{K_c}{A_c}))$ obtained  previously 
via traditional compressed sensing techniques (see, e.g., \cite{Bockelmann:ETT2013,boljanovic2017user,kueng2016robust}), {\bf ii})\,one needs to pay only a  logarithmic penalty
$O(\log^2(\frac{K_c}{A_c}))$  for increasing the total number of 
users $K_c$. Both features make the proposed scheme very attractive for IoT setups, in which the  number of active users $A_c$ as well as the total number of users  $K_c$ can be extremely large.

%
%
%
%
We  propose several efficient and low-complexity algorithms for AD and show via theoretical analysis as well as numerical simulation that they  fulfill our proposed scaling law. Our proposed AD algorithms depend only on the sample
covariance of the observations obtained at the BS and are robust to statistical variations of the users channels.

\subsection{Notation}
We represent scalar constants by non-boldface letters {(e.g., $x$ or $X$)}, sets by calligraphic letters (e.g., $\Xc$),  vectors by boldface small letters (e.g., $\xv$), and matrices by boldface capital letters (e.g., $\Xm$). 
We denote the $i$-th row and the $j$-th column of a matrix $\bfX$ with the row-vector $\Xm_{i,:}$ and the column-vector $\Xm_{:,j}$ respectively. We denote a diagonal matrix with elements $(s_1, s_2, \dots, s_k)$ by $\diag(s_1, \dots, s_k)$.
We denote the $\ell_p$-norm a vector $\bfx$ and the Frobenius norm of
a matrix $\bfX$ by $\|\bfx\|_{\ellp{p}}$ ($\|\bfx\|=\|\bfx\|_{\ellp{2}}$)
and $\|\bfX\|_\sfF$ respectively.
The  $k \times k$ identity matrix is represented by $\bfI_k$. For an integer $k>0$, we use the shorthand notation $[k]$ for $\{1,2,\dots, k\}$.

\section{Problem Formulation}
\subsection{Signal Model}
We consider a generic block fading wireless channel  between each user and the BS consisting of several CBs with each CB containing $D_c$ signal dimensions over which the channel is almost flat \cite{tse2005fundamentals}, whereas it may change smoothly or almost independently  across adjacent CBs.  
Without loss of generality, we assume that the BS devotes an individual CB to AD of a specific set  of users $\clK_c$ consisting of $K_c:=|\clK_c|$ users. 
To perform AD, the BS assigns a specific pilot sequence to each user  in $\clK_c$, where a pilot sequence for a generic user $k \in \clK_c$ is simply a
vector $\bfa_k =(a_{k,1}, \dots, a_{k,D_c})^\transp \in \bC^{D_c}$ of length $D_c$, which is
transmitted by the user $k$ over $D_c$ signal dimensions in the CB devoted to the AD if it is active. Denoting by $\bfh_k$ the $M$-dim channel vector of the
user $k \in \clK_c$ to $M$  antennas at the BS, we can write the received signal at the BS over the signal dimension $i\in [D_c]$ inside the CB as
\begin{align}\label{sig_eq1}
\bfy[i]=\sum_{k \in \clK_c} b_k\sqrt{g_k} a_{k,i} \bfh_k+ \bfz[i],
\end{align}
where $a_{k,i}$ denotes the $i$-th element of the pilots sequence $\bfa_k$, where $[D_c]:=\{1,\dots, D_c\}$,  where $g_k \in \bR_+$ denotes the large-scale fading coefficient (channel  strength) of the user $k \in \clK_c$, where $b_k \in \{0,1\}$ is a binary variable with $b_k=1$ for active and $b_k=0$ for inactive users, where $\bfz[i]\sim\cg(0, \sigma^2 \bfI_M)$ denotes the additive white Gaussian noise (AWGN) at the $i$-th signal dimension, and where we used the block fading channel model \cite{tse2005fundamentals} where  the channel vector $\bfh_k$ of each user $k \in \clK_c$ is almost constant over the signal dimensions $i \in [D_c]$ inside the CB.
Denoting by $\bfY=[\bfy[1], \dots, \bfy[D_c]]^\transp$ the $D_c \times M$ received signal over $D_c$ signal dimensions and $M$ BS antennas, we can write \eqref{sig_eq1} more compactly as
 \begin{align}\label{pilot_sig}
\bfY=\bfA \Gammam^{\frac{1}{2}} \bfH + \bfZ,
\end{align}
where $\bfA=[\bfa_1, \dots, \bfa_{K_c}]$ denotes the $D_c\times K_c$ matrix of pilot sequences of the users in $\clK_c$, where $\Gammam=\diag(\gammam)$ with $\gammam:=(\gamma_1, \dots, \gamma_{K_c}) \in \bR_+^{K_c}$ and $\gamma_k=b_k g_k$ denotes the channel strengths of the users ($\gamma_k=0$ for inactive ones), and where $\bfH=[\bfh_1, \dots, \bfh_{K_c}]^\transp$ denotes $K_c \times M$  matrix containing the $M$-dim normalized channel vectors of the users. We assume that the channel vectors $\{\bfh_k: k \in \clK_c\}$ are independent from each other and are  spatially white (i.e.,  uncorrelated along the antennas), that is,  $\bfh_k \sim \cg(0, \bfI_M)$. 
We assume that the user pilots  are also normalized to $\|\bfa_k\|^2=D_c$ and define the average SNR of a generic active user $k \in \clK_c$ over $D_c$ pilot dimensions by 
\begin{align}
\snr_k=\frac{\|\bfa_k\|^2 \gamma_k \bE[\|\bfh_k\|^2]}{\bE[\|\bfZ\|_\sfF^2]}=\frac{D_c \gamma_k M}{D_c M \sigma^2}=\frac{\gamma_k}{\sigma^2}, \ \ k \in \clA_c.\label{snr_def}
\end{align}
We call the vector $\gammam$ or equivalently the diagonal matrix $\Gammam=\diag(\gammam)$ the activity pattern of the users in $\clK_c$. 
We always assume that at each AD slot only a subset $\clA_c \subseteq \clK_c$ of the users of size $A_c:=|\clA_c|$ are active, thus, $\gammam$ is a positive sparse vector with only $A_c$ nonzero elements.
The goal of AD is to identify the subset of all active users $\clA_c$ or a subset thereof consisting of users with sufficiently strong channels $\clA_c(\nu):=\{k \in \clK_c: \gamma_k>\nu \sigma^2\}$, for a pre-specified  threshold $\nu>0$, from the noisy observations as in \eqref{pilot_sig}. 


%
%

Since we  assume that  the channel vectors are spatially
white and
Gaussian, the columns of $\bfY$ in \eqref{pilot_sig} are i.i.d. Gaussian vectors with $\bfY_{:,i} \sim \cg(0, \Sigmam_\bfy)$ where 
\begin{align}\label{eq:true_cov}
\Sigmam_\bfy=\bfA \Gammam \bfA^\herm + \sigma^2 \bfI_M=\sum_{k=1}^{K_c} \gamma_k \bfa_k \bfa_k^\herm + \sigma^2 \bfI_M
\end{align}
the covariance matrix, which is common among all the columns $\bfY_{:,i}$, $i \in [M]$.
We also  define the empirical/sample covariance  of  the columns of the observation  $\bfY$ in \eqref{pilot_sig} as
\begin{align}\label{eq:samp_cov}
\widehat{\Sigmam}_\bfy=\frac{1}{M} \bfY \bfY^\herm =\frac{1}{M} \sum_{i=1}^M \bfY_{:,i} \bfY_{:,i}^\herm.
\end{align}

\subsection{Generalized Signal Model for Activity Detection} 
Before we proceed, it is worthwhile to mention that the signal model for AD in \eqref{sig_eq1}  can be generalized in several directions:
\vspace{1mm}

\noindent { \bf 1})\,\textit{Only a fraction of signal dimensions in a CB is devoted to AD while the remaining signal dimensions are kept for communication:} This reduces pilot dimension $D_c$ in \eqref{sig_eq1}, thus, the length/dimension of the pilots sequences $\{\bfa_k: k \in \clK_c\}$ assigned to the users but preserves the number of antennas $M$. This is beneficial  when the dimension of the CB $D_c$ is significantly large and the number of active users $\clA_c \subset \clK_c$ is not so large.

\vspace{1mm}

\noindent { \bf 2})\,\textit{More than one, say $\kappa>1$, CBs are devoted to AD of a specific set $\clK_c$ of users:} As the simplest scheme, one can  assume that the same length-$D_c$ pilot sequence $\bfa_k \in \bC^{D_c}$ of each user $k \in \clK_c$ is just repeated across the signal dimensions of all $\kappa$ CBs. Due to spatially white and  Gaussian assumption of the channel vectors $\bfh_k$ in \eqref{sig_eq1}, this has the same effect as having only a single CB consisting of $D_c$ signal dimensions while effectively increasing the number of antennas to $M'=\kappa M$, i.e., by a factor of $\kappa$. In general, instead of repeating the  pilot sequence of each user over $\kappa$ CBs, one can vary the pilot sequence of each user over different CBs. This yields more well-conditioned pilot sequences (sufficient randomness and better averaging over different CBs) but does not change the underlying scaling law asymptotically for large  $D_c$, namely, this is still equivalent to having the same pilot dimension $D_c$ but increasing the number of antennas to $M'=\kappa M$.

For simplicity, in the rest of the paper, we always assume that the AD is done over an individual CB by using the whole $D_c$ signal dimensions in the CB.

\section{Proposed Algorithms for Activity Detection}
We will propose now three different estimators to solve the activity
  problem with different assumptions on the underlying statistics of the channel vectors and the sparsity of the activity pattern of the users $\gammam$. 

\subsection{Maximum Likelihood Estimate and Identifiability Condition}
We first consider the Maximum Likelihood (ML) estimator of $\gammam$ by making explicit use of Gaussianity, where after  normalization and simplification we have
\begin{align}
f(\gammam)&:=-\frac{1}{M}\log p(\bfY| \gammam)\stackrel{(a)}{=}-\frac{1}{M}\sum_{i=1}^M \log p(\bfY_{i,:}| \gammam)\\
&=  \log | \bfA \Gammam \bfA^\herm+ \sigma^2 \bfI_{M}| \nonumber \\
&+ \tr\left ( \Big( \bfA \Gammam \bfA^\herm+ \sigma^2 \bfI_{D_c}\Big ) ^{-1} \widehat{\Sigmam}_\bfy \right),\label{eq:ML_cost}
\end{align} 
where $(a)$ follows from the fact that the columns of $\bfY$ are i.i.d. (due to the spatially white user channel vectors), and where $\widehat{\Sigmam}_\bfy$ denotes the sample covariance matrix of the columns of $\bfY$ as in \eqref{eq:samp_cov}.
Note that for spatially white channel vectors considered here, $\widehat{\Sigmam}_\bfy \to \Sigmam_\bfy$ as the number of antennas $M \to \infty$. 
%
%
We also have the following  result.
\begin{theorem}\label{FN_thm}
	Consider the signal model \eqref{pilot_sig} for AD. Then, the empirical covariance matrix $\widehat{\Sigmam}_\bfy$  is a sufficient statistics for the activity pattern of the users $\gammam$. \hfill $\square$
\end{theorem}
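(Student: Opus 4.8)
\emph{Proof sketch.} The plan is to obtain the statement as a one-line application of the Fisher--Neyman factorization theorem. The starting point is the observation already recorded above: since the user channel vectors are spatially white and Gaussian, the columns $\bfY_{:,i}$, $i\in[M]$, of the received signal in \eqref{pilot_sig} are i.i.d.\ circularly-symmetric complex Gaussian vectors, $\bfY_{:,i}\sim\cg(0,\Sigmam_\bfy)$, with the common covariance $\Sigmam_\bfy=\bfA\Gammam\bfA^\herm+\sigma^2\bfI_{D_c}$ of \eqref{eq:true_cov}. First I would write the joint likelihood as a product over these $M$ independent columns,
\begin{align}
p(\bfY\,|\,\gammam)=\prod_{i=1}^{M}\frac{1}{\pi^{D_c}\,|\Sigmam_\bfy|}\exp\big(-\bfY_{:,i}^\herm\Sigmam_\bfy^{-1}\bfY_{:,i}\big),
\end{align}
and then collapse the quadratic terms in the exponent by the cyclic property of the trace,
\begin{align}
\sum_{i=1}^{M}\bfY_{:,i}^\herm\Sigmam_\bfy^{-1}\bfY_{:,i}
=\tr\Big(\Sigmam_\bfy^{-1}\sum_{i=1}^{M}\bfY_{:,i}\bfY_{:,i}^\herm\Big)
=M\,\tr\big(\Sigmam_\bfy^{-1}\widehat{\Sigmam}_\bfy\big),
\end{align}
so that $p(\bfY\,|\,\gammam)=\pi^{-D_c M}\,|\Sigmam_\bfy|^{-M}\exp\big(-M\,\tr(\Sigmam_\bfy^{-1}\widehat{\Sigmam}_\bfy)\big)$.

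The key point is then that the right-hand side depends on the data $\bfY$ \emph{only} through the sample covariance $\widehat{\Sigmam}_\bfy$ of \eqref{eq:samp_cov}, while $\Sigmam_\bfy$ is a known deterministic function of the parameter $\gammam$ once the pilot matrix $\bfA$ and the noise level $\sigma^2$ are fixed. Hence the density admits the factorized form $p(\bfY\,|\,\gammam)=g\big(\widehat{\Sigmam}_\bfy,\gammam\big)\,h(\bfY)$ with $h(\bfY)\equiv 1$ and $g\big(\widehat{\Sigmam}_\bfy,\gammam\big)$ the displayed expression, and the Fisher--Neyman factorization theorem immediately yields that $\widehat{\Sigmam}_\bfy$ is a sufficient statistic for $\gammam$.

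I do not anticipate a genuine obstacle: the argument is essentially bookkeeping. The only two points deserving a line of care are (i) carrying the per-column complex-Gaussian normalization $\pi^{-D_c}|\Sigmam_\bfy|^{-1}$, which contributes the $|\Sigmam_\bfy|^{-M}$ factor but introduces no extra dependence on $\bfY$, and (ii) noting that sufficiency is a property of the parametrized family $\{p(\cdot\,|\,\gammam)\}_{\gammam}$, so it is irrelevant whether the map $\gammam\mapsto\Sigmam_\bfy$ is injective — all dependence on $\gammam$ is funnelled through $\Sigmam_\bfy$ and all dependence on the data through $\widehat{\Sigmam}_\bfy$. As an alternative route I would remark that the zero-mean Gaussian model is an exponential family whose natural sufficient statistic is $\bfY\bfY^\herm$ (equivalently $\widehat{\Sigmam}_\bfy$), giving the same conclusion; the factorization argument above is the most direct.
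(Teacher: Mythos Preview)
Your proposal is correct and follows essentially the same approach as the paper: both write the likelihood of the i.i.d.\ Gaussian columns, observe that it depends on $\bfY$ only through $\widehat{\Sigmam}_\bfy$ (the paper simply points to the already-computed expression \eqref{eq:ML_cost}), and then invoke the Fisher--Neyman factorization theorem. Your version is just slightly more explicit in carrying out the trace manipulation and the normalization constants.
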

\begin{proof}
From \eqref{eq:ML_cost}, it is seen that the log-likelihood of $\bfY$ depends on $\bfY$ only through the sample covariance matrix $\widehat{\Sigmam}_\bfy$. Thus, from the Fischer-Neyman factorization theorem 
\cite{fisher1922mathematical, neyman1936teorema},
$\widehat{\Sigmam}_\bfy$ is a sufficient statistics for estimating $\gammam$. 
\end{proof}
\begin{remark}\label{rem:correlation}
Although we derived \eqref{eq:ML_cost} for spatially white channel vectors, the sample covariance matrix $\widehat{\Sigmam}_{\bfy}$ still provides an almost sufficient statistics for estimation of $\gammam$ as far as the channel vectors are not highly correlated as happens, for example, in \textit{line-of-sight} (LoS) propagation scenarios. Overall, correlation among the components of a user channel vector can be  seen as  reducing the effective number of antennas $M$ (extreme case of $M=1$ antenna in the LoS scenario), which degrades the  AD performance accordingly. \hfill $\lozenge$
\end{remark}

We define the  \textit{maximum likelihood} (ML) estimate of $\gammam$ as
\begin{align}
\gammam^* = \argmin_{\gammam \in \bR_+^{K_c}} f(\gammam).\label{a_ML}
\end{align}
Note that due to the invariance of the ML estimator with respect to transformation of the parameters \cite{casella2002statistical}, $\widehat{\Sigmam}_\bfy$ is also a sufficient statistics for identifying the set of active users $\widehat{\clA}_c(\nu):=\{i: {\gamma}^*_i >\nu\sigma^2\}$, where  $\nu>0$ is a suitable threshold.
%
%
Before we proceed, it is worthwhile to investigate conditions under which the activity pattern $\gammam$ is identifiable. 
We need some notation first. We define the convex cone produced by the rank-1 \textit{positive semi-definite} (PSD) matrices $\{\bfa_k\bfa_k^\herm: k \in \clK_c\}$ generated by  pilot sequences as
\begin{align}
\cone(\bfa_{\clK_c})=\left\{\sum_{k \in \clK_c} \beta_k \bfa_k \bfa_k^\herm: \beta_k \in \bR_+\right\}.
\end{align} 
We also use a similar notation $\cone(\bfa_\clV)$ for the cone generated the pilot sequences of a subset  of users $\clV \subseteq \clK_c$. Note that for any $\clV$, the  cone $\cone(\bfa_\clV)$ is a sub-cone of the cone of PSD  matrices. 
With this notation, we can also write the ML estimation in \eqref{a_ML} equivalently as estimating the cone corresponding to the set of active users.
 To make sure that the set of active users are identifiable, we need to impose the following identifiability condition on the pilot sequences:
\begin{align}
\cone(\bfa_\clV) \not = &\cone(\bfa_{\clV'}), \text{ for all } \clV, \clV' \subseteq \clK_c\nonumber\\
& \text{ where } \clV \not =\clV', \text{ with }  |\clV|, |\clV'| \leq \vartheta, \label{ident_cond}
\end{align}
where $\vartheta$ is set to a sufficiently small number to make sure that condition \eqref{ident_cond} is fulfilled.
%
Note that since the space of $D_c\times D_c$ PSD matrices, denoted by $\clS_{D_c}^+$, has (affine) dimension $\frac{D_c(D_c+1)}{2}$, the number of active $\vartheta$ users should be less than $\frac{D_c(D_c+1)}{2}$ for \eqref{ident_cond} to be fulfilled.
This bound can be also seen to be tight from  Carath\'eodory theorem \cite{caratheodory1907variabilitatsbereich}.
%
%
However, interestingly, this does not restrict  the number of potential users $K_c=|\clK_c|$. For example, when the pilots sequences $\bfa_k$, $k \in \clK_c$, are sampled i.i.d. from an arbitrary continuous distribution, no matter how large $K_c$ is, 
with probability $1$, all the sub-cones in \eqref{ident_cond} would be different and the identifiability condition  would be fulfilled provided that $\vartheta < \frac{D_c(D_c+1)}{2}$ (as also stated in Theorem 1 in \cite{docomo}). This implies that, at least in theory, one can identify the activity of $A_c=\frac{D_c(D_c+1)}{2}$ users in a set of users of arbitrary size $K_c$.  In practice, however, due to the presence of the noise,  one should also limit $K_c$ in order to guarantee a stable AD.

%
Now let us focus on the ML estimation in \eqref{eq:ML_cost}. It is not difficult to check that $f(\gammam)$ in \eqref{eq:ML_cost} is the sum of the concave function $\gammam \mapsto \log | \bfA \Gammam \bfA^\herm+ \sigma^2 \bfI_{D_c}| $ and the convex function $\gammam \mapsto \tr\left ( \Big( \bfA \Gammam \bfA^\herm+ \sigma^2 \bfI_{D_c}\Big ) ^{-1} \widehat{\Sigmam}_\bfy \right)$, so it is not convex in general. However, the following theorem implies that under suitable conditions, the global minimum of $f(\gammam)$ can be calculated exactly.
\def\pscone{\clS_{D_c}^+}
\def\pqcone{\clQ_{D_c}^+}
\begin{theorem}\label{glob_min}
	Let $f(\gammam)$ be as in \eqref{eq:ML_cost}.  Suppose that the pilot sequences $\{\bfa_k: k \in \clK_c\}$ are such that  $\cone(\bfa_{\clK_c})$  coincides  with the cone of  $D_c\times D_c$ PSD matrices $\pscone$.  Then, $f(\gammam)$ has only global minimizers. \hfill $\square$
\end{theorem}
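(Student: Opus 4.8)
The plan is to exploit a hidden convexity of $f$. Although $f$ is not convex in $\gammam$, it becomes convex under the substitution $\bfT:=\bigl(\bfA\Gammam\bfA^\herm+\sigma^2\bfI_{D_c}\bigr)^{-1}$: writing $\Sigmam(\gammam):=\bfA\Gammam\bfA^\herm+\sigma^2\bfI_{D_c}$ for the covariance model in \eqref{eq:ML_cost}, the cost is $f(\gammam)=\widetilde{f}(\bfT)$ with $\widetilde{f}(\bfT):=-\log|\bfT|+\tr(\bfT\,\widehat{\Sigmam}_\bfy)$, and $\widetilde{f}$ is convex on the cone of positive-definite matrices because $-\log|\cdot|$ is convex and $\tr(\,\cdot\,\widehat{\Sigmam}_\bfy)$ is linear. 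The hypothesis $\cone(\bfa_{\clK_c})=\pscone$ is precisely what pins down the attainable set of the new variable: as $\gammam$ ranges over $\bR_+^{K_c}$, the matrix $\bfA\Gammam\bfA^\herm=\sum_k\gamma_k\bfa_k\bfa_k^\herm$ ranges over all of $\pscone$, so $\Sigmam(\gammam)$ ranges over $\{\Sigmam:\Sigmam\succeq\sigma^2\bfI_{D_c}\}$ and $\bfT$ over the \emph{convex} set $\pqcone:=\{\bfT:0\prec\bfT\preceq\sigma^{-2}\bfI_{D_c}\}$. Hence $\min_{\gammam\in\bR_+^{K_c}}f(\gammam)=\min_{\bfT\in\pqcone}\widetilde{f}(\bfT)$, and it remains only to show that any local minimizer $\gammam^\star$ of $f$ over $\bR_+^{K_c}$ maps to a point $\bfT^\star:=\Sigmam(\gammam^\star)^{-1}$ that is \emph{globally} optimal for the convex program $\min_{\bfT\in\pqcone}\widetilde{f}(\bfT)$. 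Since that program is convex, this reduces to checking the first-order optimality condition at $\bfT^\star$, which — as $\sigma^{-2}\bfI_{D_c}\in\pqcone$ — amounts to the two conditions $\Lambdam:=\Sigmam^\star-\widehat{\Sigmam}_\bfy\succeq 0$ and $\tr\bigl(\Lambdam(\sigma^{-2}\bfI_{D_c}-\bfT^\star)\bigr)=0$, where $\Sigmam^\star:=\Sigmam(\gammam^\star)$ and $-\Lambdam=\nabla\widetilde{f}(\bfT^\star)$.

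Next I would write down the first-order conditions obeyed by $\gammam^\star$ in the original variable. Since $\bR_+^{K_c}$ is polyhedral, a local minimizer satisfies the KKT conditions for $\min_{\gammam\ge 0}f$, and differentiating \eqref{eq:ML_cost} gives $\partial f/\partial\gamma_k=\bfa_k^\herm\bfW\bfa_k$ with the Hermitian matrix $\bfW:=(\Sigmam^\star)^{-1}\bigl(\Sigmam^\star-\widehat{\Sigmam}_\bfy\bigr)(\Sigmam^\star)^{-1}$. Hence $\bfa_k^\herm\bfW\bfa_k\ge 0$ for every $k\in\clK_c$, and $\bfa_k^\herm\bfW\bfa_k=0$ for every $k\in\supp(\gammam^\star)$.

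Then the cone hypothesis converts these per-pilot conditions into the two matrix conditions above, in two steps. First, writing any $\bfP\succeq 0$ as $\sum_k\beta_k\bfa_k\bfa_k^\herm$ with $\beta_k\ge 0$ gives $\tr(\bfW\bfP)=\sum_k\beta_k\,\bfa_k^\herm\bfW\bfa_k\ge 0$, so $\bfW\succeq 0$; conjugating by $\Sigmam^\star$ this is exactly $\Lambdam=\Sigmam^\star-\widehat{\Sigmam}_\bfy\succeq 0$. Second, $\bfW\succeq 0$ together with $\bfa_k^\herm\bfW\bfa_k=0$ forces $\bfW\bfa_k=0$ for each $k\in\supp(\gammam^\star)$, whence $\bfW\bfA\Gammam^\star\bfA^\herm=\sum_k\gamma_k^\star(\bfW\bfa_k)\bfa_k^\herm=0$, which rearranges to $\bfW\Sigmam^\star=\sigma^2\bfW$. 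From this identity one reads off $\bfW(\Sigmam^\star)^{-1}=\sigma^{-2}\bfW$ and, since $\Lambdam=\Sigmam^\star\bfW\Sigmam^\star$, also $\Lambdam=\sigma^4\bfW$; hence $\tr\bigl(\Lambdam(\sigma^{-2}\bfI_{D_c}-\bfT^\star)\bigr)=\sigma^4\bigl(\sigma^{-2}\tr\bfW-\tr(\bfW\bfT^\star)\bigr)=0$. Both conditions being verified, $\bfT^\star$ is globally optimal for the convex program, so $f(\gammam^\star)=\widetilde{f}(\bfT^\star)=\min_{\bfT\in\pqcone}\widetilde{f}=\min_{\gammam\in\bR_+^{K_c}}f$ — i.e. every local minimizer of $f$ is a global one, as claimed.

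The step I expect to be most delicate is this second use of the cone hypothesis, namely turning the scalar stationarity conditions $\bfa_k^\herm\bfW\bfa_k=0$ on $\supp(\gammam^\star)$ into complementary slackness for the matrix constraint $\bfT\preceq\sigma^{-2}\bfI_{D_c}$. One has to notice that (i) combined with $\bfW\succeq 0$ these scalar conditions actually give the stronger matrix identity $\bfW\bfa_k=0$ on the support, and that (ii) it is this identity — not merely $\Sigmam^\star\succeq\widehat{\Sigmam}_\bfy$ — that yields $\bfW\Sigmam^\star=\sigma^2\bfW$ and hence $\Lambdam=\sigma^4\bfW$ with vanishing complementary-slackness term; a quick check shows that $\Sigmam^\star\succeq\widehat{\Sigmam}_\bfy$ alone does not suffice. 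The remaining pieces — convexity of $\widetilde{f}$ in $\bfT$, the identification of $\pqcone$ from $\cone(\bfa_{\clK_c})=\pscone$, and the gradient computations — are routine.
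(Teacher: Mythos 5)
Your proof is correct and follows essentially the same route as the paper: the change of variables $\bfT=\bigl(\bfA\Gammam\bfA^\herm+\sigma^2\bfI_{D_c}\bigr)^{-1}$, the observation that the cone hypothesis makes the image set $\{0\prec\bfT\preceq\sigma^{-2}\bfI_{D_c}\}$ convex, and the convexity of $-\log|\bfT|+\tr(\bfT\widehat{\Sigmam}_\bfy)$ on that set. The only difference is that you explicitly verify, via the KKT conditions in $\gammam$ and the resulting identities $\bfW\succeq 0$ and $\bfW\bfa_k=0$ on the support, that a local minimizer of $f$ maps to a point satisfying the global first-order optimality condition of the convex program --- a transfer step the paper's proof leaves implicit.
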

\begin{proof}
Proof in Appendix \ref{glob_min_app}.
\end{proof}

Note that in order for  $\cone(\bfa_{\clK_c})$ to be a good
approximation of the cone of $D_c\times D_c$ PSD matrices as in
Theorem \ref{glob_min}, $K_c$ should be larger than the dimension of
the space of PSD matrices, i.e., $K_c \gtrsim O(D_c^2)$. Interestingly, as we will see later, this holds in the desired scaling regime that we derive for AD problem and is the highly desirable setup for IoT applications.   
Finding the ML estimate $\gammam^*$ in \eqref{a_ML} requires optimization of $f(\gammam)$ over the positive orthant $\bR_+^{K_c}$. In view of Theorem \ref{glob_min}, we can apply simple off-the-shelf algorithms such as gradient descent followed by projection onto $\bR_+^{K_c}$ to find $\gammam^*$. In Appendix \ref{ML_app}, we derive such a coordinate-wise descend algorithm, where at each step we optimize $f(\gammam)$ with respect to only one of its arguments $\gamma_k$, $k \in [K_c]$, and we iterate several times over the whole set of variables until convergence. We can also include the noise variance $\sigma^2$ as an optimization parameter and estimate it along with the parameters $\gamma_k$, $k \in [K_c]$. As $f(\gammam)$ has no local minimizers from Theorem \ref{glob_min} (in the regime we consider in this paper), the coordinate-wise optimization will converge to the unique global minimizer of $f(\gammam)$ with sufficiently many iterations. This coordinate-wise optimization has a closed form expression as derived in Appendix \ref{ML_app}, and is summarized in Algorithm \ref{tab:ML_coord}.

\begin{algorithm}[t]
	\caption{Activity Detection via Coordinate-wise Descend } 
	\label{tab:ML_coord} 
{\small
	\begin{algorithmic}[1]
		\State {\bf Input:} The sample covariance matrix $\widehat{\Sigmam}_\bfy=\frac{1}{M} \bfY \bfY^\herm$ of the $D_c \times M$ matrix of samples $\bfY$.
		\State {\bf Initialize:} $\Sigmam=\sigma^2 \bfI_{D_c}$, $\gammam={\bf 0}$.
		\For { $i=1,2, \dots$}
		\State {Select an index $k \in [K_c]$ corresponding to the $k$-th component of  $\gammam=(\gamma_1, \dots, \gamma_{K_c})^\transp$  randomly or according to a specific schedule.}
		\vspace{2mm}
		\State {\bf ML:} Set $d^*= \max \Big \{\frac{ \bfa_k^\herm \Sigmam^{-1} \widehat{\Sigmam}_\bfy \Sigmam^{-1} \bfa_k -  \bfa_k^\herm \Sigmam^{-1}\bfa_k }{(\bfa_k^\herm \Sigmam^{-1}\bfa_k )^2}, -\gamma_{k} \Big \}$
		\State {\bf MMV:} Set $d^*=\max \Big \{  \frac{\sqrt{\bfa_{k}^\herm \Sigmam^{-1} \widehat{\Sigmam}_\bfy \Sigmam^{-1} \bfa_{k}} -1}{\bfa_{k}^\herm \Sigmam^{-1} \bfa_{k} }, -\gamma_{k} \Big \}$
		
		\State {\bf NNLS:} Set $d^*=\max \Big \{  \frac{\bfa_k^\herm (\widehat{\Sigmam}_\bfy - \Sigmam ) \bfa_k}{\|\bfa_k\|^4}, -\gamma_{k} \}$
		
		\State Update $\gamma_{k} \leftarrow \gamma_{k}+ d^*$.
		\State Update $\Sigmam \leftarrow \Sigmam+ d^* (\bfa_{k} \bfa_{k}^\herm)$.
		\EndFor
		\State {\bf Output:}  The resulting estimate $\gammam$.
	\end{algorithmic}}
\end{algorithm}

\subsection{Multiple Measurement Vector Approach}
Let us consider the signal model $\bfY=\bfA \Gammam^{\frac{1}{2}} \bfH + \bfZ$ in \eqref{pilot_sig} and let us define $\bfX:=\Gammam^\frac{1}{2} \bfH$. Then, we can write  \eqref{pilot_sig} as 
\begin{align}\label{pilot_sig3}
\bfY=\bfA \bfX + \bfZ.
\end{align}
This signal model also arises in a
compressed sensing  setup \cite{donoho2006compressed, candes2006near}, where one tries to recover a structured signal  $\bfX$ from the noisy measurements $\bfY$, where the $D_c \times K_c$ pilot matrix $\bfA=[\bfa_1, \dots, \bfa_{K_c}]$ plays the role of  the measurement matrix. Note that since the activity patterns $\gammam$ is a sparse vector, it induces a common (joint) sparsity among the columns of $\bfX:=\Gammam^\frac{1}{2} \bfH$. Recovery of $\bfX$ from measurements $\bfY$ in \eqref{pilot_sig3} is typically known as the \textit{Multiple Measurement Vector} (MMV) problem in compressed sensing  \cite{tropp2006algorithms, lee2012subspace}. 
  A quite well-known technique for recovering the row-sparse matrix $\bfX$ in the MMV setting is $\ell_{2,1}$-norm Least Squares ($\ell_{2,1}$-LS) as in 
\begin{align}\label{l21_ls}
{\bfX^*}=\argmin_{\bfX} \frac{1}{2} \|\bfA \bfX - \bfY\|_{\sfF}^2 + \varrho \sqrt{M} \|\bfX\|_{2,1},
\end{align}
where $\varrho>0$ is a regularization parameter and where $\|\bfX\|_{2,1}=\sum_{i=1}^{K_c} \|\bfX_{i,:}\|$ denotes the $\ell_{2,1}$-norm of the matrix $\bfX$ given by the sum of $\ell_2$-norm of its $K_c$ rows. It is well-known that $\ell_{2,1}$-norm regularization in \eqref{l21_ls} promotes the sparsity of the rows of the resulting estimate ${\bfX^*}$ and seems to be a good regularizer for detecting the user activity pattern $\gammam$ in our setup. 
Intuitively speaking, we expect that we should obtain an estimate of activity pattern of the users $\gammam$, i.e., the strength of the their channels, by looking at the $\ell_2$-norm of the rows of the estimate ${\bfX^*}$. We have the following result.
\begin{theorem}\label{mmv_thm}
	Let $\bfX^*$ be the optimal solution of $\ell_{2,1}$-LS as in
        \eqref{l21_ls}.  Let $\gammam^*=(\gamma^*_1, \dots,
        \gamma^*_{K_c}) \in \bR_+^{K_c}$, where
        $\gamma^*_i=\frac{\|\bfX^*_{i,:}\|}{\sqrt{M}}$. Then, $\gammam^*$ is the optimal solution of the  convex optimization problem $\gammam^*=\argmin_{\gamma \in \bD_+}   g(\gammam)$, where $g$ is defined by
	\begin{align}\label{g_mmv_cost}
	g(\gammam):=   \tr(\Gammam) + \tr\Big ( ( \bfA\Gammam \bfA^\herm + \varrho \bfI_{D_c})^{-1} \widehat{\Sigmam}_{\bfy}  \Big ),
	\end{align}
	where $\Gammam=\diag(\gammam)$ and $\widehat{\Sigmam}_{\bfy}$ is the sample covariance  \eqref{eq:samp_cov}. \hfill $\square$
\end{theorem}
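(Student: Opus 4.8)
The plan is to exhibit the $\ell_{2,1}$-LS problem \eqref{l21_ls} as a partial minimization, over an auxiliary nonnegative vector, of a single function $F(\bfX,\gammam)$ that is jointly convex, and then to interchange the order of minimization over $\bfX$ and over $\gammam$. Concretely, set
\[
F(\bfX,\gammam):=\tfrac12\|\bfA\bfX-\bfY\|_\sfF^2+\tfrac{\varrho}{2}\sum_{i=1}^{K_c}\frac{\|\bfX_{i,:}\|^2}{\gamma_i}+\tfrac{\varrho M}{2}\,\tr(\Gammam),\qquad\gammam\in\bR_+^{K_c},
\]
with the usual perspective convention that the $i$-th middle summand is $0$ when $\bfX_{i,:}=\mathbf 0,\ \gamma_i=0$ and $+\infty$ when $\bfX_{i,:}\neq\mathbf 0,\ \gamma_i=0$. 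Then $F$ is jointly convex, being a sum of the convex quadratic $\tfrac12\|\bfA\bfX-\bfY\|_\sfF^2$, the linear term $\tfrac{\varrho M}{2}\tr(\Gammam)$, and the jointly convex perspectives $(\bfX_{i,:},\gamma_i)\mapsto\|\bfX_{i,:}\|^2/\gamma_i$; in particular $\inf_\bfX\inf_\gammam F=\inf_\gammam\inf_\bfX F$.

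First I would minimize over $\gammam$ with $\bfX$ held fixed. Since $\min_{t\ge0}\bigl(a/t+Mt\bigr)=2\sqrt M\,\sqrt a$, attained at $t=\sqrt{a/M}$ (and at $t=0$ if $a=0$), taking $a=\|\bfX_{i,:}\|^2$ and summing over $i\in[K_c]$ gives
\[
\min_{\gammam\in\bR_+^{K_c}}F(\bfX,\gammam)=\tfrac12\|\bfA\bfX-\bfY\|_\sfF^2+\varrho\sqrt M\,\|\bfX\|_{2,1},
\]
which is exactly the cost in \eqref{l21_ls}, with the unique inner minimizer $\gamma_i=\|\bfX_{i,:}\|/\sqrt M$.

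Next I would minimize over $\bfX$ with $\gammam$ held fixed, first assuming all $\gamma_i>0$. Writing the Frobenius norm and the weighted quadratic column-wise, $\|\bfA\bfX-\bfY\|_\sfF^2=\sum_{j=1}^M\|\bfA\bfX_{:,j}-\bfY_{:,j}\|^2$ and $\sum_i\|\bfX_{i,:}\|^2/\gamma_i=\sum_j\bfX_{:,j}^\herm\Gammam^{-1}\bfX_{:,j}$, so the problem decouples into $M$ ridge-type least-squares problems with minimizers $\bfX_{:,j}=(\bfA^\herm\bfA+\varrho\Gammam^{-1})^{-1}\bfA^\herm\bfY_{:,j}$. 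Completing the square in each column and using the push-through identity $\bfA(\bfA^\herm\bfA+\varrho\Gammam^{-1})^{-1}\bfA^\herm=\bfI_{D_c}-\varrho(\bfA\Gammam\bfA^\herm+\varrho\bfI_{D_c})^{-1}$, the optimal value of the first two terms of $F$ equals $\tfrac{\varrho}{2}\sum_j\bfY_{:,j}^\herm(\bfA\Gammam\bfA^\herm+\varrho\bfI_{D_c})^{-1}\bfY_{:,j}=\tfrac{\varrho M}{2}\tr\bigl((\bfA\Gammam\bfA^\herm+\varrho\bfI_{D_c})^{-1}\widehat{\Sigmam}_\bfy\bigr)$, using $\sum_j\bfY_{:,j}\bfY_{:,j}^\herm=M\widehat{\Sigmam}_\bfy$ from \eqref{eq:samp_cov}; adding back $\tfrac{\varrho M}{2}\tr(\Gammam)$ yields $\min_\bfX F(\bfX,\gammam)=\tfrac{\varrho M}{2}\,g(\gammam)$. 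If some $\gamma_i$ vanish, every $\bfX$ with $F(\bfX,\gammam)<\infty$ has the corresponding rows equal to $\mathbf 0$, so the same computation run on the columns of $\bfA$ and $\Gammam$ indexed by $\{i:\gamma_i>0\}$ — which leave $\bfA\Gammam\bfA^\herm$ and $\tr(\Gammam)$ unchanged — shows $\min_\bfX F(\bfX,\gammam)=\tfrac{\varrho M}{2}g(\gammam)$ for all $\gammam\in\bR_+^{K_c}$.

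Combining the two orders of minimization, $\min_\bfX\bigl[\tfrac12\|\bfA\bfX-\bfY\|_\sfF^2+\varrho\sqrt M\|\bfX\|_{2,1}\bigr]=\min_{\bfX}\min_{\gammam}F=\min_{\gammam}\min_\bfX F=\tfrac{\varrho M}{2}\min_{\gammam\in\bR_+^{K_c}}g(\gammam)$; moreover $g$ is convex (linearity of $\tr(\Gammam)$ plus convexity of $\Sigmam\mapsto\tr(\Sigmam^{-1}\widehat{\Sigmam}_\bfy)$ on the PSD cone, since $\widehat{\Sigmam}_\bfy\succeq0$, precomposed with the affine map $\gammam\mapsto\bfA\Gammam\bfA^\herm+\varrho\bfI_{D_c}$), so $\argmin_{\gammam\in\bD_+}g$ is a genuine convex program. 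Finally, if $\bfX^*$ solves \eqref{l21_ls} and $\gamma_i^*:=\|\bfX^*_{i,:}\|/\sqrt M$, then by the first step $(\bfX^*,\gammam^*)$ attains $\min_\bfX\min_{\gammam}F$, hence the joint infimum, hence $\min_\bfX F(\bfX,\gammam^*)=\min_{\gammam}\min_\bfX F$; together with $\min_\bfX F(\bfX,\gammam^*)=\tfrac{\varrho M}{2}g(\gammam^*)$ this forces $g(\gammam^*)=\min_{\gammam\in\bR_+^{K_c}}g$, i.e. $\gammam^*$ minimizes $g$ over $\bD_+$, as claimed. The only delicate points I anticipate are the bookkeeping for zero entries of $\gammam$ (the $0/0$ in the perspective term and the non-invertibility of $\Gammam$) and making sure this exchange-of-minima argument truly identifies $\gammam^*$ rather than merely equating optimal values — which is why one must note that $(\bfX^*,\gammam^*)$ is an honest joint minimizer of $F$.
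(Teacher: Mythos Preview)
Your proof is correct and follows the same overall strategy as the paper: lift the $\ell_{2,1}$ penalty to a quadratic via a variational identity, swap the order of minimization, and use the matrix inversion lemma to land on $g(\gammam)$. The paper, following \cite{steffens2017compact}, uses the factorization $\bfX=\Deltam\bfV$ with $\Deltam$ diagonal (from $\|\bfu\|=\min_{\delta\bfv=\bfu}\tfrac12(\|\bfv\|^2+|\delta|^2)$), minimizes the resulting least-squares problem over $\bfV$ for fixed $\Deltam$, and only afterwards reparameterizes via $\bfP=\Deltam\Deltam^\herm/\sqrt{M}$ to reach $\gammam$; you instead use the perspective form $\sqrt{M}\,\|\bfu\|=\min_{\gamma\ge0}\tfrac12(\|\bfu\|^2/\gamma+M\gamma)$, which introduces $\gammam$ directly as the auxiliary variable and makes joint convexity of $F$ (hence the legitimacy of interchanging infima) immediate. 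Your route is slightly more streamlined, and your explicit treatment of the boundary case $\gamma_i=0$ and of why $(\bfX^*,\gammam^*)$ is an honest joint minimizer --- rather than merely matching optimal values --- are points the paper's argument leaves implicit.
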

\begin{proof}
The proof follows from Theorem 1 in \cite{steffens2017compact} and is
included in Appendix \ref{app_mmv} for the sake of
completeness.

\end{proof}

We will prove that in terms of AD we can aim for a scaling regime $A_c=O(D_c^2)$,  where indeed $A_c\gg D_c$. However, when $A_c \gg D_c$, the number of active (nonzero) rows in $\bfX$ is much larger than the number of available measurements $D_c$ via the matrix $\bfA$ in \eqref{l21_ls}. In this regime, any algorithm (including $\ell_{2,1}$-LS) would have a considerable distortion while decoding $\bfX$. Even a \textit{Minimum Mean Squared Error} (MMSE) estimator that knows the exact location of active rows (the support of $\gammam$) will have a significant MSE distortion while decoding the active rows (although it can perfectly identify the inactive ones from the knowledge of the support of $\gammam$). 
It is interesting and highly non-trivial that when only estimation of the activity pattern $\gammam$ is concerned, due to the implicit underlying averaging effect of $\ell_{2,1}$-LS stated in Theorem \eqref{mmv_thm}, namely, the fact that $\gamma^*_i=\frac{\|\bfX^*_{i,:}\|}{\sqrt{M}}$ is given by $\ell_2$-norm of the $i$-th row  of $\bfX^*$, even a noisy estimate $\bfX^*$ is sufficient to yield a reliable estimate of $\gammam$.
Theorem \ref{mmv_thm} also implies that  as far as the strengths (i.e., $\ell_2$-norm) of the rows of $\bfX^*$ are concerned, $\ell_{2,1}$-LS in \eqref{l21_ls} can be simplified to the convex optimization  \eqref{g_mmv_cost}, which depends on the observations $\bfY$ through their empirical covariance, which is a sufficient statistics for identifying $\gammam$ from Theorem \ref{FN_thm}.

Note that the function $g(\gammam)$ in \eqref{g_mmv_cost} is a convex function, thus, finding the optimal solution $\gammam^*$ can be posed as a convex optimization problem over the positive orthant $\gammam \in \bR_+^{K_c}$, which can be solved by conventional convex optimization techniques. Using the well-known Schur complement condition for positive semi-definiteness (see \cite{boyd1994linear} page 28), one can also pose  the optimization in \eqref{g_mmv_cost} as the following \textit{semi-definite program} (SDP):
	\begin{align}\label{sdp_eq}
	(\gammam^*, \Deltam^*)=&\argmin \tr(\Gammam) + \tr(\Lambdam)\nonumber\\
	&\text{ s.t. }  \left (\begin{matrix} \bfA \Gammam \bfA^\herm + \varrho \bfI_{D_c} & \Deltam \\ \Deltam^\herm &\Lambdam \end{matrix}\right ) \succeq {\bf 0},
	\end{align}
	where $\Lambdam$ denotes the auxiliary $D_c\times D_c$ optimization variable, where $\Gammam=\diag(\gammam)$ and where $\Deltam$ is the square root of $\widehat{\Sigmam}_{\bfy}$ (i.e., $\widehat{\Sigmam}_{\bfy}=\Deltam \Deltam^\herm$). In contrast with the MMV optimization in \eqref{l21_ls}, the complexity of \eqref{sdp_eq} does not grow by increasing the number of antennas $M$. 
	\begin{remark}
		In \eqref{sdp_eq} one can treat also the variable $\varrho$ as a free optimization variable to directly estimate also the noise power (note that the resulting cost function is still a jointly convex function of both variables $(\gammam, \varrho)$ and has a unique global optimal solution). As a result, the algorithm does not require the knowledge of the noise power at the BS (although, in practice, the noise power can be easily estimated at the BS from the received noise at un-allocated  signal dimensions). \hfill $\lozenge$
	\end{remark}

	 In this paper, we propose  a simple  coordinate-wise descend algorithm for minimizing the MMV cost function \eqref{g_mmv_cost}. The derivation is very similar to that of the coordinate-wise descend algorithm for the ML cost function in Appendix \ref{ML_app} and has the closed-form expression summarized in Algorithm \ref{tab:ML_coord}.

\subsection{Non-Negative Least Squares}
The last algorithm that we introduce and also theoretically analyze here is based on Non-Negative
Least Squares (NNLS) proposed in \cite{docomo}. We consider the signal model \eqref{pilot_sig} and
$\widehat{\Sigmam}_\bfy$ as in \eqref{eq:samp_cov}. In the NNLS, we
match $\widehat{\Sigmam}_\bfy$ to the true $\Sigmam_\bfy$ in
\eqref{eq:true_cov} in $\ell_2$--distance:
\begin{align}\label{eq_nnls}
\gammam^*=\argmin_{\gammam \in \bR_+^K} \|\widehat{\Sigmam}_\bfy - \bfA \diag(\gammam) \bfA^\herm -\sigma^2 \bfI_{D_c}\|^2_{\sfF}.
\end{align}
  Let $\widehat{\sigmam}_\bfy=\vec(\widehat{\Sigmam}_\bfy)$ denotes the $D_c^2 \times 1$ vector obtained by stacking the columns of $\widehat{\Sigmam}_\bfy$ and let $\bA$ be an $D_c^2 \times K_c$ matrix whose $k$-th column, $k \in [K_c]$, is given by $\vec(\bfa_k \bfa_k^\herm)$. Then, we can write \eqref{eq_nnls} in the convenient form
 \begin{align}\label{eq_nnls_vec}
   \gammam^*=\argmin_{\gammam \in \bR_+^K} \|\widehat{\sigmam}_\bfy - \bA \gammam -\sigma^2 \vec(\bfI_{D_c})\|^2_2,
 \end{align}
 as a \textit{least squares} problem with non-negativity constraint,
 known as \textit{nonnegative least squares} (NNLS).  
If
 the row span of $\bA$ intersects the positive orthant, NNLS
 implicitly also performs $\ell_1$-regularization, as discussed for
 example in \cite{slawski2013non} and called 
 $\mathcal{M}^+$-criterion in \cite{kueng2016robust}.  Because of
 these features, NNLS has recently gained interest in many applications in signal
 processing \cite{song2017scalable}, 
  compressed
 sensing \cite{kueng2016robust}, and machine learning.  In our case the
 $\mathcal{M}^+$--criterion is fulfilled in an optimally--conditioned
 manner and allows us to establish the following result:
%

\begin{theorem}
  Let $\{\avec_k\}_{k=1}^\dimParam\subset \mathbb{C}^{\dimPilots}$
  independent copies of a random vector with iid. unit-magnitude
  entries and $\dimParam\geq 16$. Fix some
  $\delta\in[8/\dimParam,4/\sqrt{41})$. If
  \begin{equation}
    \dimPilots(\dimPilots-1)\geq c' \delta^{-2}s\log^2(e\dimParam/s)
  \end{equation}
   then with overwhelming probability the following holds: For all activity pattern vectors 
  $\gammam^\circ$, the solution $\bxnnls$ of \eqref{eq_nnls_vec} is
  guaranteed to fulfill for $1\leq p\leq 2$:
  \begin{equation}
    \begin{split}
      \lVert \bxtrue - \bxnnls\rVert_{\ellp{p}} 
      \leq
      &\frac{2C}{s^{1-\frac{1}{p}}} \sigma_s (\bxtrue)_{\ellp{1}} +
      \frac{2D}{s^{\frac{1}{2}-\frac{1}{p}}} \left( 1+
        \frac{\lambda\tau'\sqrt{\dimPilots}}{\sqrt{\dimPilots-1}} \right)
      \frac{\lVert \vc{d}\rVert_{\ellp{2}}}{\dimPilots}\\
    \end{split}
    \label{eq:short:thm:mse}
  \end{equation}
  where  $\sigma_s (\bxtrue)_1$ denotes the $\ell_1$--norm of
  $\gammam^\circ$ after removing its $s$ largest components, where 
   $\bfd=\vec(\widehat{\Sigmam}_\bfy - \sum_{k=1}^{K_c} \gamma^\circ_k
  \bfa_k \bfa_k^\herm)$, and where $C$, $D$, and $\tau'$ only depend
  on $\delta$, and $\lambda$ is a global constant.  
  \label{thm:nnls:mse}
\end{theorem}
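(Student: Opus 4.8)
The plan is to read the NNLS estimator \eqref{eq_nnls_vec} as an instance of \emph{noisy nonnegative compressed sensing} with the tall sensing matrix $\bA\in\bC^{D_c^2\times K_c}$ whose $k$-th column is $\vec(\avec_k\avec_k^\herm)$, and to feed it into the robust-recovery machinery for NNLS of \cite{kueng2016robust, slawski2013non} (the ``$\mathcal{M}^+$--criterion''), whose two hypotheses are (i) the existence of a bounded vector in the row span of $\bA$, which furnishes the implicit $\ell_1$-regularization NNLS enjoys, and (ii) a rescaled restricted isometry property (equivalently, a robust nullspace property) for $\bA$ at sparsity level of order $s$. Writing the data as $\widehat{\sigmam}_\bfy=\bA\gammam^\circ+\vc{d}$ (recall $\bA\gammam^\circ=\vec\!\big(\sum_k\gamma^\circ_k\avec_k\avec_k^\herm\big)$), the vector $\vc{d}$ of the statement is exactly the perturbation the estimator must tolerate; the known shift $\sigma^2\vec(\bfI_{D_c})$ in \eqref{eq_nnls_vec} is harmless, being proportional to the $\mathcal{M}^+$--direction $\vc{t}$ defined below, so its contribution is absorbed by the same certificate and only rescales constants, which is why the bound can be stated in terms of the full deviation $\vc{d}$.

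Hypothesis (i) is deterministic and, in fact, optimally conditioned. Because every pilot has unit-magnitude entries, $\lVert\avec_k\rVert^2=D_c$, hence $\inp{\vec(\avec_k\avec_k^\herm)}{\vec(\bfI_{D_c})}=\tr(\avec_k\avec_k^\herm)=D_c$ for every $k$; equivalently $\bA^\herm\vc{t}=\mathbf{1}_{K_c}$ for $\vc{t}:=\tfrac1{D_c}\vec(\bfI_{D_c})$, with $\lVert\vc{t}\rVert_2=1/\sqrt{D_c}$. This $\vc{t}$ is the $\mathcal{M}^+$--certificate that makes NNLS self-regularizing; propagated through the abstract error bound, the quantity $\sqrt{D_c(D_c-1)}\,\lVert\vc{t}\rVert_2$ is precisely what produces the prefactor $\big(1+\lambda\tau'\sqrt{D_c}/\sqrt{D_c-1}\big)=1+\Theta(1)$ multiplying the noise term in \eqref{eq:short:thm:mse}.

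Hypothesis (ii) is the crux. The correct normalization is $D_c(D_c-1)$: since $(\bA^\herm\bA)_{k\ell}=|\inp{\avec_k}{\avec_\ell}|^2$, one has $\bE[\bA^\herm\bA]=D_c(D_c-1)\,\bfI_{K_c}+D_c\,\mathbf{1}_{K_c}\mathbf{1}_{K_c}^\transp$ (diagonal $\lVert\avec_k\rVert^4=D_c^2$, off-diagonal $\bE|\inp{\avec_k}{\avec_\ell}|^2=D_c$), so once the rank-one mean along the $\mathcal{M}^+$--direction is removed, the Gram matrix $\tfrac1{D_c(D_c-1)}\bA^\herm\bA$ is the identity in expectation. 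One then has to show that the centered design $\tfrac1{\sqrt{D_c(D_c-1)}}\bA$ has RIP of order $\asymp s$ with constant $\delta$, uniformly, as soon as $D_c(D_c-1)\geq c'\delta^{-2}s\log^2(eK_c/s)$: for $2s$-sparse $\vc{v}$ one must bound $\sup_{\vc{v}}\Big|\tfrac1{D_c(D_c-1)}\sum_{i\neq j}\big|\sum_k v_k a_{k,i}\overline{a_{k,j}}\big|^2-\lVert\vc{v}\rVert_2^2\Big|$, and since by independence across $k$ the inner sums are sub-Gaussian in $\vc{v}$, the supremum over sparse $\vc{v}$ is a chaos process controlled by a generic-chaining / matrix-deviation argument. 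The quadratic (rank-one lifting) structure of the measurements is what turns the usual $s\log(eK_c/s)$ into $s\log^2(eK_c/s)$ --- the same sample-complexity signature as RIP for covariance sketching and phaseless designs. This concentration holds with failure probability decaying like $\exp\!\big(-c\,\delta^2 D_c(D_c-1)/\log^2(eK_c/s)\big)$, i.e.\ overwhelmingly; the conditions $K_c\geq16$ and $\delta\geq8/K_c$ are the mild floors that keep this tail and the union bound over the $\binom{K_c}{2s}$ supports meaningful. I expect this RIP/chaining estimate to be the only genuinely hard step --- everything else is bookkeeping.

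With the $\mathcal{M}^+$--certificate $\vc{t}$ and RIP constant $\delta<4/\sqrt{41}$ (the threshold the abstract theorem requires to deduce the robust nullspace property) in hand, I would invoke the robust NNLS recovery guarantee of \cite{kueng2016robust}: since $\gammam^\circ\in\bR_+^{K_c}$ and the minimizer $\gammam^*$ of \eqref{eq_nnls_vec} lies in $\bR_+^{K_c}$ too, for every $\gammam^\circ$, every $\vc{d}$, and every $1\leq p\leq2$,
\[
\lVert\gammam^\circ-\gammam^*\rVert_{\ellp{p}}\ \leq\ \frac{C_1(\delta)}{s^{1-1/p}}\,\sigma_s(\gammam^\circ)_{\ellp{1}}\ +\ \frac{C_2(\delta)}{s^{1/2-1/p}}\cdot\frac{\big(1+\sqrt{D_c(D_c-1)}\,\lVert\vc{t}\rVert_2\big)\,\lVert\vc{d}\rVert_{\ellp{2}}}{\sqrt{D_c(D_c-1)}} .
\]
Substituting $\lVert\vc{t}\rVert_2=1/\sqrt{D_c}$ and rewriting $1/\sqrt{D_c(D_c-1)}$ and $\sqrt{D_c(D_c-1)}\,\lVert\vc{t}\rVert_2$ in terms of $D_c$ and $\sqrt{D_c/(D_c-1)}$ reproduces \eqref{eq:short:thm:mse} with $C=\tfrac12C_1(\delta)$, $D=\tfrac12C_2(\delta)$, $\tau'$ a function of $\delta$ alone, and $\lambda$ an absolute constant; uniformity over $\gammam^\circ$ is automatic because RIP yields uniform recovery. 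The last piece of bookkeeping is to fold the ``order $\asymp s$'' of the RIP and the numerical constant of the recovery theorem into the explicit $c'$ of the hypothesis.
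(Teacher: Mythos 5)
Your proposal is correct and follows essentially the same route as the paper: the $\mathcal{M}^+$-certificate $\vec(\bfI_{D_c})/D_c$, removal of the rank-one trace component to isolate a centered design normalized by $D_c(D_c-1)$, an RIP for that centered design at rate $\delta^{-2}s\log^2(K_c/s)$, conversion to the $\ell_2$-robust nullspace property below the $4/\sqrt{41}$ threshold, and the NNLS recovery guarantee of Kueng--Jung. The only substantive difference is how the RIP is sourced: you sketch a direct chaos/generic-chaining bound over the (row-wise sub-Gaussian) off-diagonal coordinates, whereas the paper observes that the centered columns $\vec(\avec_k\avec_k^\herm-\bfI_{D_c})$ are independent, isotropic, and subexponential, and imports the RIP for heavy-tailed column-independent matrices from Adamczak et al.\ and Gu\'edon et al.; both yield the same $s\log^2$ sample complexity, but the latter avoids any new concentration work.
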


The proof, given in the appendix \ref{appendix:nnlsproof}, is based on
a combination of the NNLS results of \cite{kueng2016robust} and an
extension of RIP-results for the heavy-tailed column-independent model
\cite{Adamczak2011,Guedon2014:heavy:columns}.  We like to mention that
the theorem holds even for more general random models and the constant
can be computed explicitly.  For example, $\delta=0.5$ gives
$\tau'\approx 1.5$, $C\approx 8.6$ and $D\approx 11.3$.
The result is uniform meaning that with high probability (on a draw of
$\bA$) it holds {\em for all} $\gammam^\circ$. For
$s=A_c=\lVert\gammam^\circ\rVert_0$ it implies (up to
$\|\vc{d}\|_{2}$-term) exact recovery since then
$\sigma_s (\bxtrue)_{\ellp{1}}=0$.  A relevant extension of this
result to the case $p\rightarrow\infty$ whould be important but, in
this generality, it is known that then one can not hope for a linear
scaling in $s$ (see here for example \cite[Theorem
3.2]{dirksen:ripgap}).  Since
$\|\cdot\|_{\ellp{\infty}}\leq\|\cdot\|_{\ellp{p}}$ our result
\eqref{eq:short:thm:mse} also implies an estimate for the
communication relevant $\ell_\infty$-case but with sub-optimal scaling
(we will discuss this below).  Furthermore improvements for this
particular case may be possible in the non-uniform or averaged case,
as it has been investigate for the subgaussian case in
\cite{slawski2013non}.

A straightforward analysis in Appendix \ref{cov_err_app} shows that for Gaussian and spatially white
user channel vectors, the statistical fluctuation term $\bfd$ in
Theorem \ref{thm:nnls:mse} concentrates very well around its mean given by
\begin{align}
\|\bfd\|&=\|\vec(\widehat{\Sigmam}_\bfy - \sum_{k=1}^{K_c} \gamma^\circ_k \bfa_k \bfa_k^\herm - \sigma^2 \bfI_{D_c})\|= \|\vec(\widehat{\Sigmam}_\bfy - \Sigmam_\bfy)\| \nonumber\\
& =\|\widehat{\Sigmam}_\bfy - \Sigmam_\bfy\|_\sfF\nonumber\\
&\approx \sqrt{\bE[\|\widehat{\Sigmam}_\bfy - \Sigmam_\bfy\|_\sfF^2]}\nonumber\\
&= \frac{\tr(\Sigmam_\bfy)}{\sqrt{M}}.
\end{align}
 Assuming that at a specific AD slot only $A_c$
of the users are active and setting $s=A_c$ in Theorem \ref{thm:nnls:mse},
using $\sigma_s(\gammam^\circ)=0$ for $s=A_c$, and the well-known
inequality $\|\gammam^\circ\|_1 \leq \sqrt{A_c} \|\gammam^{\circ}\|_2$,
we obtain the following scaling law on the performance of NNLS
\begin{align}
\lVert \bxtrue - \bxnnls\rVert_{\ellp{2}} \leq
c_3 \left ( \frac{\sigma^2}{\sqrt{M}} + \sqrt{\frac{A_c}{M}} \|\gammam^\circ\|_2 \right  )\label{gamma_perf}
\end{align}
with a constant $c_3=2c_2 \big( 1 + \frac{\tau'\sqrt{\dimPilots}}{\sqrt{\dimPilots-1}}\big)=O(1)$ provided that 
\begin{align}
A_c \log^2(\frac{K_c}{ A_c}) \lesssim O(D_c^2). \label{A_c_perf}
\end{align}
It is important to note that: {\bf i})\,from \eqref{gamma_perf} the error term only depends on the number of active users $A_c$ and vanishes if the number of antennas $M$ is slightly larger than $A_c$ (i.e., $A_c/M=o(1)$) so that sufficient statistical averaging happens, {\bf ii})\,from \eqref{A_c_perf} NNLS can identify as large as $O(D_c^2)$ active users by paying only a poly-logarithmic penalty  $O(\log^2 (\frac{K_c}{A_c}))$ for increasing the number of potential users $K_c$. As stated before, this is a very appealing scaling law  in setups such as IoT and D2D, where $K_c$ can be dramatically large.

\section{Simulation Results}
In this section, we evaluate the performance of our algorithms via numerical simulations. 
\subsection{Performance Metric}
We assume that the output of each algorithm is an estimate $\gammam^*$ of the activity pattern of the users. 
We define $\widehat{\clA}_c(\nu):=\{i: {\gamma}^*_i >\nu\sigma^2\}$, with  $\nu>0$, as the estimate of the set of active users.
We also define the  detection/false-alarm probability averaged over the active/inactive users as
\begin{align}
\probd(\nu) =\frac{\bE[| \clA_c \cap \widehat{\clA}_c|]}{A_c},\ \  \probf(\nu) =\frac{\bE[| \widehat{\clA}_c \backslash \clA_c|]}{K_c-A_c}
\end{align}
where $A_c$ and $K_c$ denote the number of active and the  number of potential users, respectively. By varying $\nu \in \bR_+$, we plot the \textit{Receiver Operating Characteristic} (ROC) \cite{poor2013introduction} as a measure of performance of our  proposed algorithms.

\subsection{Comparison with the Literature}
\subsubsection{Vector Approximate Message Passing}
We compare the performance of our proposed algorithm with that of \text{Vector Approximate Message Passing} (VAMP) algorithm proposed for AD in \cite{liu2017massive, chen2018sparse}. Given the noisy observation $\bfY=\bfA \bfX + \bfZ$ as in \eqref{pilot_sig3}, VAMP recovers an estimate of the row-sparse signal $\bfX$ as follows
\begin{align}
\bfX^{t+1}&=\eta(\bfA^\herm \bfR^t + \bfX^t, \bfg, t),\label{vamp1}\\
\bfR^{t+1}&=\bfY - \bfA \bfX^{t+1} + \frac{K_c}{D_c} \bfR^t \langle \eta'(\bfA^\herm \bfR^t + \bfX^t, \bfg, t) \rangle \label{vamp2}
\end{align}
where $\eta$ is the MMSE vector denoising function applied row-wise,  where $\eta'$ denotes the component-wise derivative of $\eta$, where $\bfg=(g_1, \dots, g_{K_c})^\transp$ denotes the large-scale fading coefficients of the users, where $\langle. \rangle$ denotes the averaging operator, and $\bfR^t \langle \eta'(\bfA^\herm \bfR^t + \bfX^t, \bfg, t)$ is the well-known Onsager's term. It was shown in \cite{kim2011belief, ziniel2013efficient} that, due to the Onsager correction, in the asymptotic regime of $D_c, A_c,K_c \to \infty$, the noisy input $\widetilde{\bfX}^t:=\bfA^\herm \bfR^t + \bfX^t$ to the MMSE vector denoiser decouples to (for spatially white user channel vectors)
\begin{align}\label{decoupled_model}
\widetilde{\bfX}^t_{i,:} = \bfX^t_{i,:} + \tau^t \bfN_{i,:}, \ i\in [K_c],
\end{align}
where $\bfN$ is an $K_c \times M$ matrix consisting of i.i.d. $\cg(0,1)$ elements, and where $\tau^t$ is obtained by a simple state evolution (SE) equation $\tau^{t}=\xi(\tau^{t-1}, \bfg)$. We refer to \cite{liu2017massive, chen2018sparse} for further explanation of VAMP and the derivation of MMSE denoiser and the SE equation. 
For simulations, we consider an ideal scenario for VAMP where the large-scale fading coefficients $\bfg$ are known at the BS and are used in the VAMP algorithm and also in the derivation of the SE equation. 

\subsubsection{Empirical Genie-aided tuning of VAMP (g-VAMP)}
We have observed empirically that, for large number of BS antennas $M$, VAMP is quite sensitive and numerically unstable when the number of active users $A_c$ is much larger than the pilot dimension $D_c$. To avoid this numerical instability, we apply a genie-aided tuning of the VAMP, where at each iteration we use the rows of  $\widetilde{\bfX}^t:=\bfA^\herm \bfR^t + \bfX^t$ (the noisy input to the MMSE vector denoiser as in \eqref{vamp1}) corresponding to the inactive users to estimate the  noise variance $\tau^t$ in the decoupled model \eqref{decoupled_model}.
We call the resulting algorithm g-VAMP and use it for comparison.

\subsubsection{Comparison with our proposed algorithms}
Our proposed algorithms have the following advantages compared with VAMP:
\begin{enumerate}
	\item they do not require the knowledge of large-scale fading coefficients of the channel $\bfg=(g_1, \dots, g_{K_c})$ (or its statistics) and the number of active users $A_c$, which is required for deriving (and tuning) the VAMP algorithm but is difficult to have in IoT setups due to the sporadic nature of the  traffic of the users (especially $A_c$).
	
		\item they require only the sample covariance of channel observations, thus, they are quite robust to variation in the statistics (i.e., Gaussianity, spatial correlation, etc.) of the user channel vectors, whereas having the  knowledge of the statistics of the channel vectors is quite critical for deriving the MMSE vector denoiser in VAMP and  obtaining the appropriate SE equation.
		
		\item they do not require any  tuning and are numerically stable.

	
\end{enumerate}

\begin{figure}[t]
	\centering
	\includegraphics[scale=0.75]{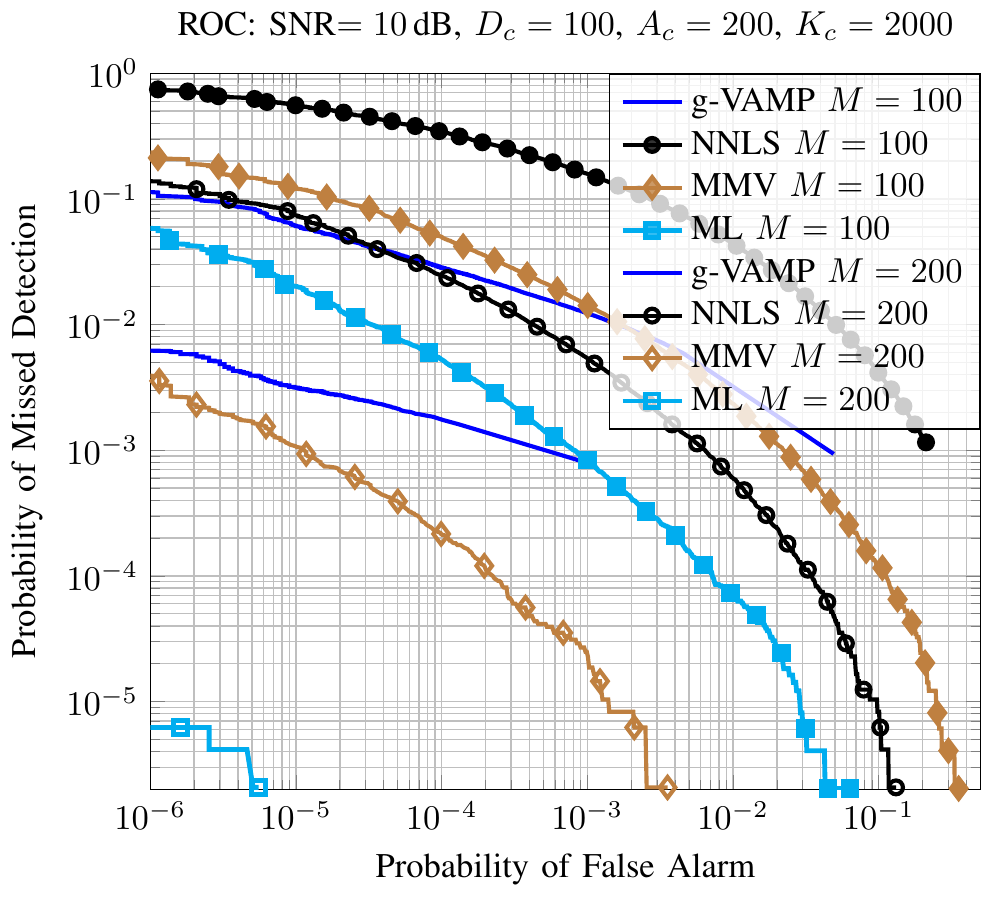}
	\caption{Scaling of the performance of algorithms vs. $M$.}
	\label{fig:scale_with_M}
\end{figure}%
\subsection{Scaling with Number of Antennas $M$}
In this section, we compare the performance of our proposed algorithm for different number of BS antennas $M$. We consider a CB of dimension $D_c=100$, and a total number of $K_c=2000$ users assigned to the CB. We assume that at each slot only $A_c=200$ out of $K_c=2000$ users are active. We also assume a symmetric scenario where all the active users have the same large-scale fading coefficients $g_k, k\in [K_c]$, and an SNR of $\frac{g_k}{\sigma^2}=10$\,dB. It is worthwhile to note that our algorithms do not require the knowledge of  channel strengths $g_k, k\in [K_c]$. Fig.\,\ref{fig:scale_with_M} illustrated the simulation results. It is seen that 
\begin{enumerate}
\item NNLS does not perform well when the number of antennas is less the the number of active users ($M=100$, $A_c=200$)  but its performance dramatically improves by increasing the number of antennas ($M=A_c=200$).

\item As reported in \cite{liu2017massive, chen2018sparse}, VAMP performs very well when the pilot dimensions is close to the number of active users ($D_c \lesssim A_c$). Our results also confirm this and illustrate that in the more interesting regime of $D_c \ll A_c$ (here $D_c=100$, $A_c=200$) suited for IoT, VAMP performance is comparable to that of MMV for $M=100$ while MMV works much better for $M=200$. Also, note that MMV as well as NNLS and ML do not require the knowledge of the large-scale fading coefficient of the channel, thus, are quite robust.

\item ML performs much better than all the other algorithms and requires much less number of antennas $M$. 
\end{enumerate}

\subsection{Scaling with the total Number of Users $K_c$}
In this section, we investigate the dependence of the performance of the algorithms on the total number of users $K_c$. We again assume  a CB of dimension $D_c=100$, and $A_c=200$ active users. We set the number of antennas $M=A_c=200$ to guarantee a good performance for all the algorithm. We vary the total number of users in $K_c=\{2000, 4000\}$. Fig.\,\ref{fig:AD_scaling_with_Kc} illustrates the simulation results. It is seen that the performance of NNLS (and that of the other algorithms) is not sensitive to the number of users $K_c$, as expected from the scaling law (the poly-logarithmic dependence on $K_c$)  claimed in Theorem \ref{thm:nnls:mse}.

\begin{figure}[t]
	\centering
	\includegraphics[scale=0.75]{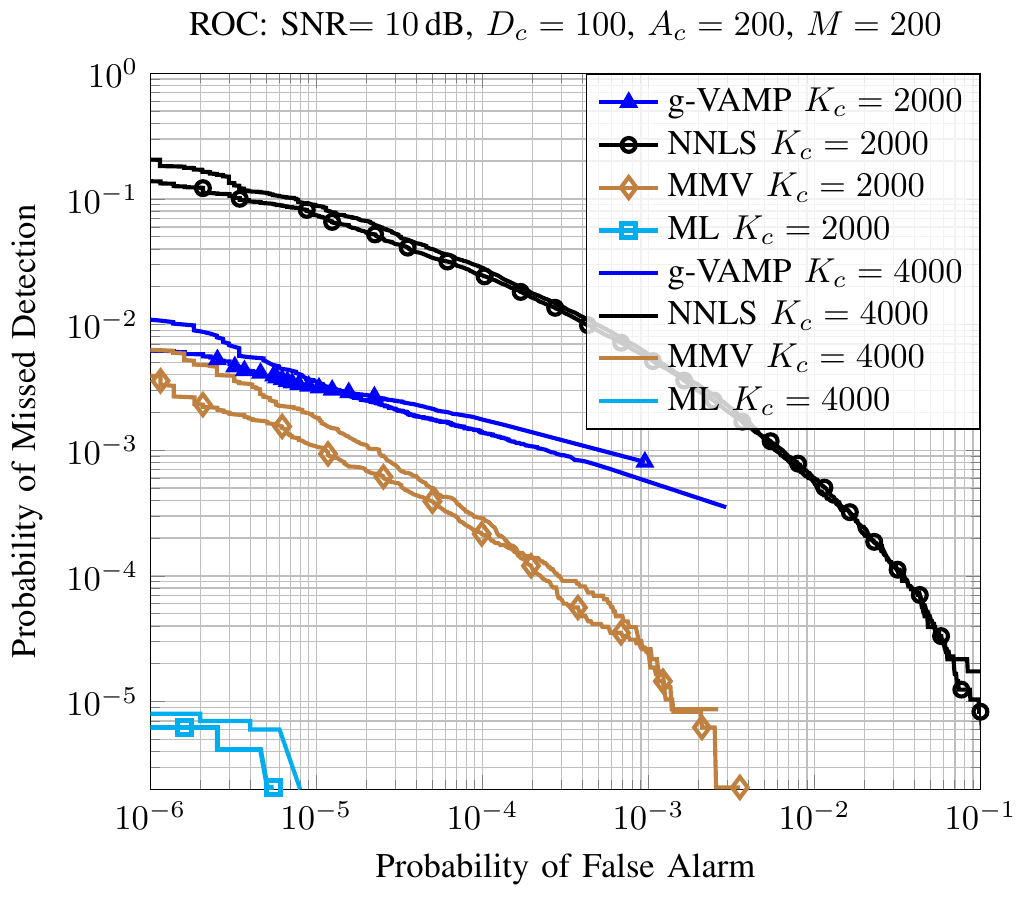}
	\caption{Scaling of performance of the algorithms vs. $K_c$.}
	\label{fig:AD_scaling_with_Kc}
\end{figure}%

\section{Correlated user channel vectors}
In this paper, we always assumed that the channel vector of all the users are spatially white (see also Remark \ref{rem:correlation}). In this section, we repeat the simulations for spatially correlated channel vectors. We consider a simple case where the BS is equipped with a \textit{Uniform Linear Array} (ULA) with $M$ antennas. It is well-known that for the ULA, the spatial covariance matrix of the channel vectors becomes a Toeplitz matrix, which can be diagonalized approximately with the \textit{Discrete Fourier Transform} (DFT) matrix  such that the correlated channel vector $\bfh \in \bC^M$ of a generic user can be written as $\bfh=\bfF_M \bfw$, where $\bfF_M$ denotes the DFT matrix of order $M$ with normalized columns and where $\bfw=(w_1, \dots, w_M)^\transp$ denotes a complex Gaussian vector with independent components $w_i \sim \cg(0, \beta_i)$, where $\sum_{i=1}^M \beta_i=\bE[\|\bfw\|^2]=\bE[\|\bfh\|^2]=M$. In a ULA, the $i$-th column of $\bfF_M$ corresponds to the array response for a planar wave with the \textit{angle of arrival} (AoA) $\theta_i=-\theta_{\max} + \frac{2(i-1) \theta_{\max}}{M}$, $i\in [M]$, where $\theta_{\max}$ denotes the maximum angle covered by the BS antennas, and where $w_i$ denotes the contribution to the channel vector $\bfh$ coming from those scatterers lying in the interval $\theta \in [\theta_i, \theta_{i+1})$. We refer to \cite{haghighatshoar2017massive, haghighatshoar2017low} for  further details.

To create spatially correlated channel vectors, we will assume an angular block-sparse propagation model where 
\begin{align*}
\beta_i= \left \{ 
\begin{array}{ll}  
\frac{M}{M_\text{eff}} & \text{ for }i \in \Big \{i_0,  \dots, (i_0+M_
\text{eff}-1)\,\text{mod}\, M\Big \}, \\
0 & \text{else where,} 
\end{array} 
\right.
\end{align*}
such that only $M_\text{eff}$ consecutive AoAs are active with $i_0$ denoting the index of the start of the block. As also mentioned in Remark \ref{rem:correlation} such a spatial correlation reduces the effective number of antennas from $M$ to $M_\text{eff} <M$. 

For the simulations, we assume that the start of the block $i_0\in[M]$ is selected uniformly at random for each user. We also assume that $\frac{M_\text{eff}}{M}=0.5$ such that the channel vector of each user has the normalized angular spread of $50\%$ (compared with the whole angular spread $2 \theta_{\max}$ of the array).  
Fig.\,\ref{fig:AD_sim_corr} illustrates the simulation results. For the case of correlated channel vectors, we consider $M=400$ antennas and $50\%$ spatial correlation, which amounts to $M_\text{eff}=200$ effective number of antennas. We compare the results with the case of spatially white channel vectors. It is seen from Fig.\,\ref{fig:AD_sim_corr} that the results match very well, which confirms the fact that spatial correlation reduces the effective number of antennas (as stated in Remark \ref{rem:correlation}).

\begin{figure}[t]
	\centering
	\includegraphics[scale=0.75]{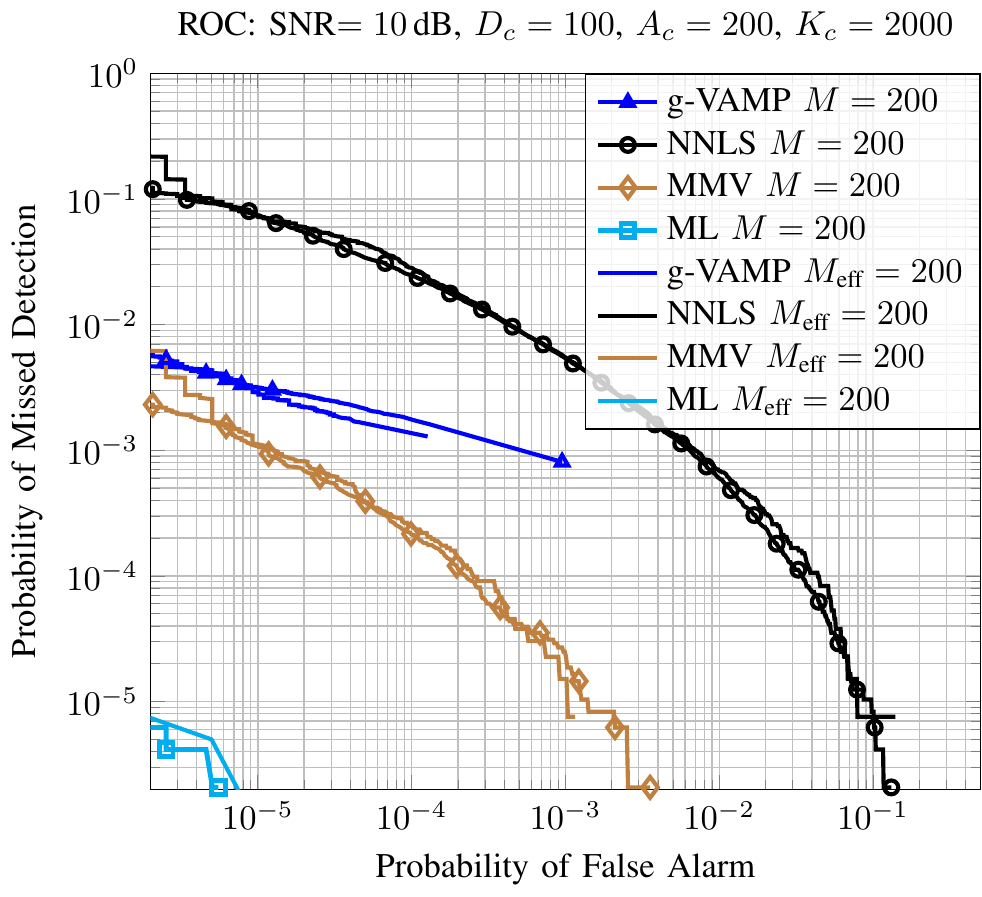}
	\caption{Comparison of the performance of the algorithms for  spatially white and spatially correlated channel vectors with the same effective number of antennas $M=200$ (spatially white) and $M=400$, $M_\text{eff}=200$ (spatially correlated).}
	\label{fig:AD_sim_corr}
\end{figure}%

\section{Conclusion}
 In this paper, we studied the problem of user activity
		detection  in a massive MIMO setup, where the BS
	has $M \gg 1$ antennas. 
	We showed that with 
	a CB containing $D_c$ signal dimensions
	one can stably estimate the activity of $A_c=O(D_c^2/\log^2(\frac{K_c}{A_c}))$ active users in a set of $K_c$ users, which is much larger than  the previous bound $A_c=O(D_c)$
	obtained via traditional compressed sensing techniques. In particular, in our proposed scheme
	one needs
	to pay only a poly-logarithmic penalty
	$O(\log^2(\frac{K_c}{A_c}))$ for increasing the number of potential
	users $K_c$, which makes it ideally suited for activity detection in IoT setups. 
	We proposed  low-complexity algorithms for activity detection and provided numerical simulations to
	illustrate our results. We also compared the performance of our proposed activity detection algorithms with that of other competitive algorithms in the literature.

\appendices
\section{Proofs of the Results}
\subsection{Derivation of the coordinate-wise ML Optimization}\label{ML_app}
In this section, we derive a closed-form expression for the coordinate-wise optimization of the ML cost function $f(\gammam)$ in \eqref{eq:ML_cost}. Let $k \in [K_c]$ be the index of the selected coordinate and let us define 
$f_k(d)=f(\gammam+ d \bfe_k)$ where $\bfe_k$ denotes the $k$-th canonical basis with a single $1$ at its $k$-th coordinate and zero elsewhere.
Setting $\Sigmam=\Sigmam(\gammam)=\bfA \Gammam \bfA^\herm + \sigma^2 \bfI_{D_c}$ where $\Gammam=\diag(\gammam)$ and applying the well-known Sherman-Morrison rank-1 update identity \cite{sherman1950adjustment} we obtain that
\begin{align}
\big (\Sigmam + d \bfa_k \bfa_k^\herm \big )^{-1}=\Sigmam^{-1} - \frac{d\, \Sigmam^{-1} \bfa_k \bfa_k^\herm \Sigmam^{-1}}{1+ d \, \bfa_k^\herm \Sigmam^{-1} \bfa_k}. \label{dumm_ml_app2}
\end{align} 
Using \eqref{dumm_ml_app2} and applying the well-known determinant identity 
\begin{align}
\big | \Sigmam + d \bfa_k \bfa_k^\herm \big | =(1+ d\, \bfa_k^\herm \Sigmam ^{-1} \bfa_k) \big | \Sigmam\big |,
\end{align}
we can simplify $f_k(d)$ as follows
\begin{align}
f_k(d)= c &+ \log (1+ d\, \bfa_k^\herm \Sigmam ^{-1} \bfa_k) \nonumber \\
& -  \frac{  \bfa_k^\herm \Sigmam^{-1} \widehat{\Sigmam}_\bfy \Sigmam^{-1} \bfa_k }{1+ d \, \bfa_k^\herm \Sigmam^{-1} \bfa_k} d \label{dumm_ml_app3}
\end{align}
where $c=\log \big | \Sigmam \big | +\tr( \Sigmam^{-1} \widehat{\Sigmam}_\bfy)$ is a constant term independent of $d$. Note that from \eqref{dumm_ml_app3}, $f_k(d)$ is well-defined only when $d>d_0:=-\frac{1}{\bfa_k^\herm \Sigmam ^{-1} \bfa_k}$. Taking the derivative of $f_k(d)$ yields
\begin{align*}
f_k'(d)= \frac{  \bfa_k^\herm \Sigmam^{-1}\bfa_k }{1+ d \, \bfa_k^\herm \Sigmam^{-1} \bfa_k} - \frac{  \bfa_k^\herm \Sigmam^{-1} \widehat{\Sigmam}_\bfy \Sigmam^{-1} \bfa_k }{(1+ d \, \bfa_k^\herm \Sigmam^{-1} \bfa_k)^2}.
\end{align*}
The only solution of $f_k'(d)$ is given by 
\begin{align}
d^*= \frac{ \bfa_k^\herm \Sigmam^{-1} \widehat{\Sigmam}_\bfy \Sigmam^{-1} \bfa_k -  \bfa_k^\herm \Sigmam^{-1}\bfa_k }{(\bfa_k^\herm \Sigmam^{-1}\bfa_k )^2}.
\end{align}
Note that $d^* \geq d_0=-\frac{1}{\bfa_k^\herm \Sigmam^{-1}\bfa_k }$, thus, one can check from \eqref{dumm_ml_app3} that $f_k$ is indeed well-defined at $d=d^*$.
Moreover, we can check from \eqref{dumm_ml_app3} that $\lim_{\epsilon \to 0^+} f_k(d_0+\epsilon)=\lim_{d \to +\infty} f_k(d)=+\infty$, thus, $d=d^*$ must be the global minimum of $f_k(d)$ in $(d_0, \infty)$. Note that since after the update we have $\gamma_k \leftarrow \gamma_k + d$, to preserve the positivity of $\gamma_k$, the optimal update step $d$ is in fact given by $\max \big \{d^*, -\gamma_k\big \}$ as illustrated in Algorithm \ref{tab:ML_coord}.

\subsection{Proof of Theorem \ref{glob_min}}\label{glob_min_app}
This follows from the \textit{geodesic convexity} of the ML cost function $f(\gammam)$ \cite{wiesel2012geodesic} when $\bfA \Gammam \bfA^\herm$ spans the whole set of $D_c\times D_c$ PSD matrices. Here, we provide another simple proof.

Consider $\cone(\clK_c)=\{\sum_{k=1}^{K_c} \beta_k \bfa_k \bfa_k^\herm: \beta_k\geq 0\}$ as a sub-cone of PSD matrices $\pscone$ produced by the pilot sequences. By our assumption, $\cone(\clK_c)$ approximates the cones of $D_c \times D_c$ PSD matrices $\clS_{D_c}^+$ very well. Let us  first define 
\begin{align}
\clP_{D_c}&:=\{\sigma^2 \bfI_{D_c} + \sum_{k=1}^{K_c} \beta_k \bfa_k \bfa_k^\herm: \beta_k \in \bR_+\}\\
&=\{\sigma^2 \bfI_{D_c} + \bfC: \bfC \in \cone(\clK_c)\}\\
&\stackrel{(a)}{=} \sigma^2 \bfI_{D_c} + \clS_{D_c}^+,
\end{align}
where in $(a)$ we used the assumption that
$\cone(\clK_c)\approx \clS_{D_c}^+$. We also define
$\clQ_{D_c}=\clP_{D_c}^{-1}:=\{\bfP^{-1}: \bfP \in
\clP_{D_c}\}$. It is not difficult to check that
\begin{align}
\clQ_{D_c}=\{\bfQ\in \clS_{D_c}^+: \lambda_{\max}(\bfQ) \leq \frac{1}{\sigma^2}\},
\end{align}
where $\lambda_{\max}$ denotes the largest singular value of a matrix. Since $\lambda_{\max}$ is a convex function over the convex set of PSD matrices $\clS_{D_c}^+$, it results that $\clQ_{D_c}$ is indeed a convex subset of $\clS_{D_c}^+$. Applying this change of variable, we can, therefore, write the ML estimation of $\gammam$ equivalently as the following optimization problem
\begin{align}\label{glob_min_dumm1}
\bfQ^*=\argmin_{\bfQ \in \clQ_{D_c}} - \log |\bfQ| + \tr(\bfQ \widehat{\Sigmam}_{\bfy}).
\end{align} 
Note that since $\bfQ \mapsto - \log |\bfQ| + \tr(\bfQ \widehat{\Sigmam}_{\bfy})$ is a convex function of $\bfQ$ and $\clQ_{D_c}$ is a convex set, \eqref{glob_min_dumm1} is a convex optimization problem, whose local minimizers are global minimizers as well. This implies that the ML cost function $f(\gammam)$ in \eqref{eq:ML_cost} has only global minimizers.

\subsection{Proof of Theorem \ref{mmv_thm}} \label{app_mmv}
The proof follows by extending Theorem 1 in \cite{steffens2017compact}.
The key observation is that for a $\bfu \in \bC^M$, the $\ell_2$ norm $\|\bfu\|$ can be written as the output of the following optimization 
\begin{align}\label{eq:norm_rep}
\|\bfu\|=\min_{\bfv \in \bC^M,\,\delta \in \bC:\, \delta \bfv =\bfu} \frac{1}{2}(\|\bfv\|^2 + |\delta|^2).
\end{align}
In particular, $\|\bfu\|=|\delta^*|^2$, where $s^*$ is the optimal solution of \eqref{eq:norm_rep}. Applying this argument to the rows of $\bfX$, we can write the $l_{2,1}$ norm of $\bfX$ as follows
\begin{align}\label{l21_rep}
\|\bfX\|_{2,1}=\min_{\bfV\in \bC^{K_c\times  M},\, \Deltam\in \bD:\, \Deltam \bfV=\bfX} \frac{1}{2}(\|\bfV\|_{\sfF}^2 + \|\Deltam\|_{\sfF}^2),
\end{align}
where $\bD$ denotes the space of $K_c\times K_c$ diagonal matrices with diagonal elements in $\bC$, and where $\Deltam=\diag(\delta_1, \dots, \delta_{K_c})\in \bD$. In particular, $\|\bfX_{i,.}\|=|\delta_k^*|^2$, where $\Deltam^*=\diag(\delta_1^*, \dots, \delta_{K_c}^*)$ is the optimal solution of \eqref{l21_rep}. Replacing $\|\bfX\|_{2,1}$ with \eqref{l21_rep}, we can transform the optimization problem \eqref{l21_ls}  into 
\begin{align}\label{trans_eq}
(\bfV^*, \Deltam^*)=\argmin_{\bfV\in \bC^{K_c\times  M},\,\Deltam\in \bD} &\frac{1}{\varrho \sqrt{M}} \sum_{k=1}^M \|\bfA \Deltam \bfV_{:,t} - \bfY_{:,k}\|_{\sfF}^2 \nonumber\\
&+ \|\bfV\|_{\sfF}^2 + \|\Deltam\|_{\sfF}^2.
\end{align}
For a fixed $\Deltam$, the minimizing $\bfV$ as a function of $\Deltam$ can be obtained via a least-square minimization, where after replacing the solution in \eqref{trans_eq} and applying the \textit{matrix inversion lemma} \cite{hager1989updating} and further simplifications, we obtain the following optimization in terms of $\Deltam$
\begin{align}\label{l21_rep2}
&\Deltam^*=\argmin_{\Deltam \in \bD}  \tr(\frac{\Deltam \Deltam^\herm}{\sqrt{M}})\nonumber\\
&+\frac{1}{M} \sum_{k=1}^M \tr\Big ( (\bfA \frac{\Deltam \Deltam^\herm}{\sqrt{M}} \bfA^\herm + \varrho \bfI_m)^{-1} \bfy(t) \bfy(t)^\herm \Big ).
\end{align}
Note that this optimization can be reparameterized with $\bfP=\frac{\Deltam \Deltam^\herm}{\sqrt{M}}=\diag(\frac{|\delta_1|^2}{\sqrt{M}}, \dots, \frac{|\delta_a|^2}{\sqrt{M}}) \in \bD_+$, where $\bD_+$ denotes the space of all $K_c\times K_c$ diagonal matrices with positive diagonal elements. Thus, we can write:
\begin{align}\label{l21_rep3}
&\bfP^*=\argmin_{\bfP \in \bD_+}   \tr(\bfP)\nonumber\\
&+\frac{1}{M} \sum_{k=1}^M \tr\Big ( ( \bfA\bfP \bfA^\herm + \varrho \bfI_m)^{-1}  \bfy(t) \bfy(t)^\herm  \Big )
\end{align}
Denoting by $\bfP^*=(p_1^*, \dots, p_a^*)$ the optimal solution of \eqref{l21_rep3}, we have the underlying relation $\frac{\|\bfX^*_{i,.}\|}{\sqrt{M}}=\frac{|\delta^*_i|^2}{\sqrt{M}}=p^*_i$. This implies that the optimal solution $\gammam^*$ in the statement of the theorem, is indeed the solution of the following convex optimization
\begin{align*}
\gammam^*&:=\argmin_{\gamma \in \bD_+}   \sum_{i=1}^{K_c} \gamma_i + \tr\Big ( ( \bfA\diag(\gammam) \bfA^\herm + \varrho \bfI_m)^{-1} \widehat{\Sigmam}_{\bfy}  \Big )\\
&=\argmin_{\gamma \in \bD_+} \tr(\Gammam) + \tr\Big ( ( \bfA\Gammam \bfA^\herm + \varrho \bfI_{D_c})^{-1} \widehat{\Sigmam}_{\bfy}  \Big ),
\end{align*}
where we replaced $ \widehat{\Sigmam}_{\bfy}=\frac{1}{M} \bfY \bfY^\herm =\frac{1}{M} \sum_{k=1}^{M} \bfY_{:,k} \bfY_{:,k}^\herm$ and defined $\Gammam=\diag(\gammam)$. 
This completes the proof.

\subsection{Proof of the Recovery Guarantee for NNLS, Theorem \ref{thm:nnls:mse}}
\label{appendix:nnlsproof}

Let $\{\Acol_k:=\avec_k\avec^\herm_k\}_{k=1}^\dimParam$ be the rank-1
matrices generated by the pilots sequences of the users. We consider
the noisy model $\SigmaEmp=\sum_{k=1}^\dimParam\xtrue_k\cdot \Acol_k+\mt{D}$
where $\bxtrue=(\xtrue_1,\dots,\xtrue_\dimParam)\in \bR_+^{K_c}$
denotes the unknown activity pattern of the users to be estimated. We assume that $\bxtrue$ is
potentially $s$--sparse or compressible (well--approximated by sparse
vectors) and $\mt{D}$ is a residual error matrix. 
In particular, the vectorized problem is:
\begin{equation}
  \sigmaEmp=\text{vec}(\SigmaEmp)=:\Amatrix\bxtrue+\vc{d}.
\end{equation}
where $\vc{d}=\text{vec}(\mt{D})$, $\bxnnls\in\mathbb{R}_+^\dimParam$
and $\bA$ is a $D_c^2 \times K_c$ matrix whose $k$-th column is given
by $\vec(\bfa_k\bfa_k^\herm)$.  We are interested in the behavior of
the NNLS solution \eqref{eq_nnls_vec} for a given sparsity parameter
$s$, typically taken as number of active users $A_c$.  To cover also
the more general compressible regime we define the $\ell_1$-error of
its best $s$-sparse approximation to $\bxtrue$ as:
\begin{equation}
  \sigma_s(\bxtrue)_{\ellp{1}}=\min_{\lVert \xsome\rVert_0\leq s}\lVert \bxtrue-\xsome\rVert_{\ellp{1}}
\end{equation}
The motivation of directly using the unregularized NNLS for recovery
comes from the self-regularizing property of matrices $\Amatrix$
having simultaneously the \emph{$\mathcal{M}^+$--criterion} (its ``row
span intersects the positive orthant'') and the \emph{$\ell_2$--robust
  null space property} of order $s$ ($s$-NSP) (details see,
\cite{kueng2016robust}).\\[.3em]

\newcommand{\AmatrixR}{\Amatrix^R} \newcommand{\AcolR}{\Acol^R}
    First of all, we consider also complex-valued pilot sequences
    $\avec_k$ such that the columns $\vec(\avec_k\avec_k^\herm)$ of
    $\Amatrix$ are complex-valued as well.  The following generic
    recovery result for the NNLS has been shown in \cite[Theorem
    1]{kueng2016robust} (it has been formulated for the real setting
    but it holds for complex matrices as well). First, the row span of
    the complex $\dimPilots^2\times \dimParam$--matrix $\Amatrix$
    intersects the positive orthant (called as
    $\mathcal{M}^+$-criterion in \cite{kueng2016robust}) since for
    $\vc{t}=t\cdot\vec(\ID_{\dimPilots})\in\mathbb{R}^{\dimPilots^2}$
    with $t>0$ (giving
    $\lVert \vc{t}\rVert^2_{\ellp{2}}=t^2\dimPilots$) we get the
    strictly (element-wise) positive vector:
  \begin{equation*}
    \vc{w}:=\Amatrix^\herm\vc{t}=t\cdot\Aop^*(\ID_{\dimPilots})=
    \{t\cdot\lVert \avec_k\rVert_{\ellp{2}}^2\}_{k=1}^\dimParam=t\dimPilots\cdot\mathbf{1}>0
  \end{equation*}
  Here we used $\lVert \avec_k\rVert_{\ellp{2}}^2=\dimPilots$ which is
  fulfilled when each $\avec_k$ has unit-magnitude entries.  Since
  this vector is optimally conditioned and allows us to easily use the
  existing results in \cite{kueng2016robust} also for the case
  $p\geq 1$ without extending the proofs (please check here the
  proof of \cite[Theorem 3]{kueng2016robust} for the case where
  $\mt{W}$ is a multiple of the identity - in this case \cite[Lemma
  1]{kueng2016robust} is not necessary).  Assume for now that
  $\Amatrix$ additionally has the $\ell_q$-robust nullspace property
  ($\ell_q$--NSP) of order $s$ with parameters $\rho \in (0,1)$ and
  $\tau >0$ with respect to the $\ell_2$-norm (on
  $\mathbb{C}^{\dimPilots^2}$), meaning that:
  \begin{equation}
    \lVert \vc{v}_S \rVert_{\ellp{q}} \leq \frac{\rho}{s^{1-1/q}} \lVert
    \vc{v}_{\bar{S}} \rVert_{\ellp{1}} + \tau \left\lVert \Amatrix\vc{v} \right\rVert_{\ellp{2}}
    \quad \forall \vc{v}\in\mathbb{R}^\dimParam
    \label{eq:short:def:nsp}
  \end{equation}
  holds for all subsets $S\subset[\dimParam]$ with $|S|\leq s$.  We
  will show below that this indeed the case with high probability for
  $q=2$.  As also well-known (see the discussion yielding \cite[Theorem 4.25]{Foucart2013}),
  $\ell_q$--NSP with respect to a norm $\|\cdot\|$ implies
  $\ell_p$--NSP with respect to $s^{\frac{1}{p}-\frac{1}{q}}\|\cdot\|$
  with the same parameters $(\rho,\tau)$ for $1\leq p\leq q$.

  Note here that, although we consider complex matrices, the vector
  $\vc{v}$ is real-valued.  We conclude then from \cite[Theorem
  3]{kueng2016robust} (and its corresponding straightforward extension
  to $1\leq p\leq q$) for the choice $t=1/\dimPilots$ that (since in
  our case then $\kappa=1$ and, using notation in
  \cite{kueng2016robust}, $\|\mt{W}\|=1/(t\dimPilots)=1$), that the
  solution $\bxnnls$ of the NNLS  
  \eqref{eq_nnls_vec}
  obeys:
  \begin{equation}
    \lVert \bxtrue - \bxnnls\rVert_{\ellp{p}} 
    \leq 
    \frac{2C}{s^{1-\frac{1}{p}}} \sigma_s (\bxtrue)_{\ellp{1}} +
    \frac{2D}{s^{\frac{1}{q}-\frac{1}{p}}} \left( \frac{1}{\dimPilots} +
      \tau \right) \lVert \vc{d} \rVert_{\ellp{2}}  
    \label{eq:short:nsp3}
  \end{equation}
  with $C = \frac{(1+\rho)^2}{1-\rho}$ and
  $D = \frac{3+\rho}{1-\rho}$. This argumentation is already
  sufficient to replace regular $\ell_1$-minimization with NNLS.

  The essential main task here is now to establish that the nullspace
  property for the random matrix $\Amatrix$ holds with high
  probability for the desired sampling rates.  To this end, we will
  restrict to those measurements which are related to the isotropic part
  of $\Amatrix$. More precisely, first define the ``centered''
  matrices $\Acol_k^o:=\avec_k\avec_k^\herm-\ID_{\dimPilots}$.  Now it
  easy to check that if $\avec_k$ has independent and unit magnitude
  entries it follows that for all complex matrices $\mt{Z}$ it holds:
  \begin{equation}
    \begin{split}
      \mathbb{E}|&\langle\Acol^o_k,\mt{Z}\rangle|^2
      =\mathbb{E}|\langle \avec_k,\mt{Z}\avec_k\rangle-\trace(\mt{Z})|^2
      =\lVert \mt{Z}\rVert_\fro^2\\
    \end{split}
    \label{eq:thm:short:variance:strict}
  \end{equation}
  meaning that $\vec(\Acol^o_k)$ is a complex-isotropic random vector.
  Furthermore, this special structure gives us the inequality:
  \begin{equation}
    \lVert\Amatrix\vc{v}\rVert^2_{\ellp{2}}
    =\lVert\Amatrix^o\vc{v}\rVert^2_{\ellp{2}}+\dimPilots|\langle\mathbf{1}_\dimParam,\vc{v}\rangle|^2
    \geq\lVert\Amatrix^o\vc{v}\rVert^2_{\ellp{2}}
    \label{eq:thm:short:Av:lowerbound}
  \end{equation}
  where $\Amatrix^o$ is the corresponding matrix having the
  ``centered'' columns $\vec(\Acol_k^o)$.  Indeed, the inequality
  above is tight since already $2$--sparse vectors can be orthogonal
  to $\mathbf{1}_\dimParam$. This implies, that if the matrix
  $\Amatrix^o$ has $\ell_p$--NSP then also $\Amatrix$ has $\ell_p$--NSP with
  the same parameters.
    
    To establish the NSP of $\Amatrix^o$ we shall make use of existing
    RIP results for the real-valued matrices with independent
    isotropic and heavy-tailed columns.  Therefore, let us consider
    the equivalent real matrix
    $\AmatrixR:=[\Re(\Amatrix^o);\Im(\Amatrix^o)]\in\mathbb{R}^{2\dimPilots^2\times\dimParam}$
    and denote its $k$'th column by
    $\vec(\AcolR_k)\in\mathbb{R}^{2\dimPilots^2}$ where
    $\AcolR_k:=\Re(\Acol^o_k)+i\cdot\Im(\Acol^o_k)$.  For any real
    matrix $\mt{Z}$ we have
  \begin{equation}
    \begin{split}
      \mathbb{E}|\langle \AcolR_k,\mt{Z}\rangle|^2
      &=
      \mathbb{E}(|\Re\langle \Acol^o_k,\mt{Z}\rangle|^2+|\Im\langle \Acol^o_k,\mt{Z}\rangle|^2)\\
      &=\mathbb{E}|\langle \Acol^o_k,\mt{Z}\rangle|^2\overset{\eqref{eq:thm:short:variance:strict}}{=}
      \|\mt{Z}\|^2_\fro
    \end{split}
  \end{equation}
  Thus, the real matrix $\AmatrixR$ has (real-)isotropic and
  independent columns with subexponential marginals.  We adopt a
  normalization $\lambda>0$ such that we have
  $\mathbb{E}\exp(|\langle \lambda\Acol^o_k,\mt{Z}\rangle|)\leq 2$.
  More precisely, from the inequality \cite[Lemma 2.5]{Adamczak2010}
  we have that the subexponential norm
  ($\|\vc{x}\|_{\psi_1}=\inf\{K\geq0\,:\,\mathbb{E}\exp(|\vc{x}|/K)\leq
  2\}$) is controlled by the second moment, i.e., there is a universal
  constant $\lambda$ such that for all $\mt{Z}$ with
  $\|\mt{Z}\|_\fro=1$ we have:
  \begin{equation}
    \lVert \langle \lambda\AcolR_k,\mt{Z}\rangle\rVert_{\psi_1}\leq
    (\mathbb{E}|\langle \AcolR_k,\mt{Z}\rangle|^2)^{1/2}
    =1    \label{eq:short:subexpbound}
  \end{equation}
  Using this normalization, a general $\ell_2$--RIP statement for a matrix
  $\frac{\lambda}{\sqrt{m}}\AmatrixR$ has been shown in
  \cite{Guedon2014:heavy:columns,Adamczak2011}.  Define
  the restricted isometry constant
  \begin{equation}
    \delta_{2s}:=\sup_{0<\|\vc{v}\|_{\ellp{0}}\leq 2s}|\frac{\lVert \lambda\AmatrixR
    \vc{v} \rVert^2_{\ellp{2}}}{m\|\vc{v}\|^2_{\ellp{2}}}-1|
  \end{equation}
  If $\delta_{2s}\in[0,1)$ the matrix
  $\frac{\lambda}{\sqrt{m}}\AmatrixR$ has $\ell_2$--RIP of order $2s$.
  Furthermore, if even $\delta_{2s}<4/\sqrt{41}\approx0.62$ then
  $\frac{\lambda}{\sqrt{m}}\AmatrixR$ has the $\ell_2$--robust NSP of
  order $s$ with parameters with
  $\rho=\delta_{2s}/(\sqrt{1-\delta_{2s}^2}-\delta_{2s}/4)$ and
  $\tau'=\sqrt{1+\delta_{2s}}/(\sqrt{1-\delta^2_{2s}}-\delta_{2s}/4)$
  \cite[Theorem 6.13]{Foucart2013}.  We use now \cite[Theorem 1, case
  2, $\alpha=1$]{Guedon2014:heavy:columns}.  Assume
  $8/\delta\leq \dimParam\leq c s\exp(c s\sqrt{m}/2)$ (for example take
  $\dimParam\geq 16$ to have some feasible $\delta\leq4/\sqrt{41}$;
  for the constant $c$, see \cite{Adamczak2011}))
  and set
  the sparsity parameter to the following integer part:
  \begin{equation}
    2s=[\frac{\delta^2}{c'}\log[\frac{c'\dimParam}{\delta^2m}]^{-2}]
  \end{equation}
  then, according to mention result in \cite{Guedon2014:heavy:columns}
  it holds that
  $\mathbb{P}\{\delta_{2s}\leq\delta\}\geq 1-2^{-9}\delta$ ($c'$ is
  the constant $C^2$ in \cite{Guedon2014:heavy:columns}).  Furthermore
  for $m \geq c' s/\delta^2$ this can be rearranged
  to:
  \begin{equation}
    \dimPilots(\dimPilots-1)=m\geq c'\delta^{-2}s\log^2(\dimParam/s)
  \end{equation}
  Therefore for all $\vc{v}\in\mathbb{R}^\dimParam$
  and all subsets $S\subset[\dimParam]$ with $|S|\leq s$ is holds:
  \begin{equation}
      \begin{split}
        \lVert \vc{v}_S \rVert_{\ellp{2}}
        &\leq
        \frac{\rho}{\sqrt{s}} \lVert
        \vc{v}_{\bar{S}} \rVert_{\ellp{1}} + \frac{\tau' \lambda}{\sqrt{m}}\left\lVert \AmatrixR\vc{v} \right\rVert_{\ellp{2}}
        \\
        &=
        \frac{\rho}{\sqrt{s}} \lVert
        \vc{v}_{\bar{S}} \rVert_{\ellp{1}} + \frac{\tau' \lambda}{\sqrt{m}}\left\lVert \Amatrix^o\vc{v} \right\rVert_{\ellp{2}}
        \\
        &\overset{\eqref{eq:thm:short:Av:lowerbound}}{\leq}
        \frac{\rho}{\sqrt{s}} \lVert
        \vc{v}_{\bar{S}} \rVert_{\ellp{1}} + \frac{\tau' \lambda}{\sqrt{m}}\left\lVert \Amatrix\vc{v} \right\rVert_{\ellp{2}}
      \end{split}
    \end{equation}
    with $\rho\leq\delta/(\sqrt{1-\delta^2}-\delta/4)$ and
    $\tau'\leq\sqrt{1+\delta}/(\sqrt{1-\delta^2}-\delta/4)$.
    Summarizing, $\ell_2$--RIP of order $2s$ for
    $\frac{\lambda}{\sqrt{m}}\AmatrixR$ implies $\ell_2$--NSP of
    $\Amatrix$ of order $s$ with parameters $\rho$ and
    $\tau=\tau'\lambda/\sqrt{m}$.

    Finally, let us write \eqref{eq:short:nsp3} for the case $q=2$
    (therefore $1\leq p\leq 2$) in a more
    convinient form since $m=\dimPilots(\dimPilots-1)$:
    \begin{equation}
      \begin{split}
        \lVert \bxtrue &- \bxnnls\rVert_{\ellp{p}} 
        \overset{\eqref{eq:short:nsp3}}{\leq}
        \frac{2C}{s^{1-\frac{1}{p}}} \sigma_s (\bxtrue)_{\ellp{1}} +
        \frac{2D}{s^{\frac{1}{2}-\frac{1}{p}}} \left( 1 +
          \dimPilots\tau \right) \frac{\lVert \vc{d} \rVert_{\ellp{2}}}{\dimPilots}\\
        &=
        \frac{2C}{s^{1-\frac{1}{p}}} \sigma_s (\bxtrue)_{\ellp{1}} +
        \frac{2D}{s^{\frac{1}{2}-\frac{1}{p}}} \left( 1 +
          \frac{\lambda\tau'\sqrt{\dimPilots}}{\sqrt{\dimPilots-1}} \right) \frac{\lVert \vc{d} \rVert_{\ellp{2}}}{\dimPilots}
      \end{split}
      \label{eq:short:nsp4}
    \end{equation}
    which is \eqref{eq:short:thm:mse}.

\subsection{Analysis of Error of the Sample Covariance Matrix}\label{cov_err_app}
We first consider the simple case where $\{\bfy(t): t \in [M]\}$, are $M$ i.i.d. realization of a $D_c$-dim complex Gaussian vector with zero mean $\bE[\bfw(t)]={\bf 0}$ and a diagonal covariance matrix $\Sigmam_\bfy=\bE[\bfy(t) \bfy(t)^\herm]=\diag(\betam)$, thus, $y_i(t) \sim \cg(0, \beta_i)$, $i\in[D_c]$. Let us denote by $\Deltam=\widehat{\Sigmam}_\bfy -  \Sigmam_\bfy$ be the deviation of the sample covariance matrix from its mean. Note that the $(i,j)$ component of $\Deltam$ is given by
\begin{align}
\Delta_{ij}=\frac{1}{M} \sum_{t \in [M]} y_i(t)y_j^*(t) - \beta_i \delta_{ij}
\end{align}
where $\delta_{ij}=\bI_{\{i=j\}}$ denotes the discrete delta function. Note that $\bE[\Delta_{ij}]=0$ for all $i,j$. Moreover, since $\Delta_{i,j}$ is the average of $M$ i.i.d. terms, $\{y_i(t)y_j^*(t) - \beta_i \delta_{ij}: t \in [M]\}$, we have  
\begin{align}
\bE[|\Delta_{ij}|^2]=\frac{\bE[|y_i(t)y_j^*(t) - \beta_i \delta_{ij}|^2]}{M}.
\end{align}
For $i \not = j$, we have that  
\begin{align}
\bE[|y_i(t) y_j(t)^* - \beta_i \delta_{i,j}|^2]&=\bE[|y_i(t) y_j(t)^*|^2]\nonumber\\
&\stackrel{(a)}{=}\bE[|y_i(t)|^2] \bE[|y_j(t)|^2]\nonumber\\
&=\beta_i \beta_j,\label{cov_err_dumm_1}
\end{align}
where in $(a)$ we used the independence of the different components of $\bfy(t)$.
Also, for $i=j$, we have that
\begin{align}
\bE[\big | y_i(t)y_j(t)^*&-\beta_i \delta_{i,j} \big |^2]=\bE[\big | |y_i(t)|^2-\beta_i  \big |^2]\nonumber\\
&=\bE[|y_i(t)|^4] -2\beta_i \bE[|y_i(t)|^2]+ \beta_i^2\nonumber\\
&\stackrel{(a)}{=}2\bE[|y_i(t)|^2]^2 -2\beta_i^2+\beta_i^2\nonumber\\
&=2\beta_i^2- 2\beta_i^2 + \beta_i^2=\beta_i^2,\label{cov_err_dumm_2}
\end{align}
where in $(a)$ we used the identity $\bE[|y_i(t)|^4]=2\bE[|y_i(t)|^2]^2$ for complex Gaussian random variables.
Overall, from \eqref{cov_err_dumm_1} and \eqref{cov_err_dumm_2}, we can write $\bE[|\Delta_{ij}|^2]=\frac{\beta_i \beta_j}{M}$. 
Thus, we have that
\begin{align}
\bE[\|\Deltam\|_{\sfF}^2]=\sum_{ij} \bE[|\Delta_{ij}|^2] &= \frac{\sum_{i,j} \beta_i \beta_j}{M}\nonumber\\
&=  \frac{(\sum \beta_i)^2}{M} = \frac{\tr(\Sigmam_\bfy)^2}{M}.\label{exp_del}
\end{align}

\begin{remark}
It is worthwhile to mention that although  \eqref{exp_del} was derived under the Gaussianity of the observations $\{\bfy(t): t\in [M]\}$, the result can be easily modified for general distribution of the components of $\bfy(t)$. More specifically, let us define  
\begin{align}\label{eta_bound}
\max_{i} \frac{\bE[|y_i(t)|^4]}{\bE[|y_i(t)|^2]^2}=:\varsigma < \infty.
\end{align}
Then, using \eqref{cov_err_dumm_1} and applying \eqref{eta_bound} to \eqref{cov_err_dumm_2}, we can obtain the following upper bound
\begin{align}
\bE[\|\Deltam\|_{\sfF}^2] &\leq \max\{\varsigma-1,1\} \times \frac{\sum_{i,j} \beta_i \beta_j }{M}\\
& \leq \max\{\varsigma-1,1\} \times \frac{\tr(\Sigmam_\bfy)^2}{M},
\end{align}
which is equivalent to \eqref{exp_del} up to the constant multiplicative factor $\max\{\varsigma-1,1\}$. \hfill $\lozenge$
\end{remark}

In practice, $\|\Deltam\|_\sfF^2$ concentrates very well around its mean $\bE[\|\Deltam\|_{\sfF}^2]$. Therefore, 
the deviation  between the true and empirical covariance matrix can be approximated by
\begin{align}\label{cov_err_dumm_3}
\|\widehat{\Sigmam}_{\bfy} - \Sigmam_\bfy\|_\sfF \simeq \frac{ \tr(\Sigmam_\bfy)}{\sqrt{M}}.
\end{align}

Now, assume that the covariance matrix $\Sigmam_\bfy$ is not in a diagonal form and let $\Sigmam_\bfy=\bfU \diag(\betam) \bfU^\herm$ be the Singular Value Decomposition (SVD) of $\Sigmam_{\bfy}$. By multiplying all the vectors $\bfy(t)$ by the orthogonal matrix $\bfU^\herm$ to whiten them and noting the fact that multiplying by $\bfU^\herm$ does not change the Frobenius norm of a matrix, we can see that  \eqref{cov_err_dumm_3} holds true in general also for non-diagonal covariance matrices.


\section{Acknowledgement} 
The authors would like to thank R. Kueng for inspiring discussions and helpful comments. P.J. is supported by DFG grant JU 2795/3.

\balance 

{\small
\bibliographystyle{IEEEtran}
\bibliography{references2}
}

\end{document}